\newcommand{\APEP}{\mbox{\sc APEP}}
\newcommand{\PrBPP}{\PromiseBPP}
\newtheorem{theorem}{Theorem}[section]
\newtheorem{claim}{Claim}[theorem]
\newtheorem{proposition}{Proposition}
\newtheorem{observation}[theorem]{Observation}
\newtheorem{corollary}[theorem]{Corollary}
\theoremstyle{definition} 
\newtheorem{definition}[theorem]{Definition}
\newcommand{\algoapep}{\mbox{$A_{\rm ape}$}}
        \newcommand{\numberNP}{{\mathchoice{\raisebox{1pt}
        {$\displaystyle\#$}{\rm NP}}{\raisebox{1pt}
        {$\textstyle\#$}{\rm NP}}{\raisebox{1pt}
        {$\scriptstyle\#$}{\rm NP}}{\raisebox{1pt}
        {$\scriptscriptstyle\#$}{\rm NP}}}}
\newcommand{\sharpNP}{\numberNP}
\newcommand{\ignore}[1]{}
\newcommand{\newfontobj}[2]{
  \newcommand{#1}[1]{
    \expandafter\def\csname##1\endcsname{{#2 ##1}}}}
\newfontobj{\class}{\rm}
\newfontobj{\lang}{\bf}
\newfontobj{\oper}{\rm}
\newcommand{\poly}{\mbox{\tt poly}}
\title{Promise Problems Meet Pseudodeterminism
\thanks{Research supported in part by NSF grant 1934884}\\}
\author{Peter Dixon\thanks{Department of Computer Science, Iowa State University. tooplark@iastate.edu}, ~~A. Pavan\thanks{Department of Computer Science, Iowa State University. pavan@cs.iastate.edu}, ~~N. V. Vinodchandran\thanks{Department of Computer Science and Engineering, University of Nebraska, Lincoln. vinod@cse.unl.edu}}
\date{}
\begin{document}

\maketitle
\begin{abstract} The {\sc Acceptance Probability Estimation Problem} (\APEP) is to additively approximate the acceptance probability of a Boolean circuit. This problem admits a probabilistic approximation scheme. A central question is whether we can design a {\em pseudodeterministic} approximation algorithm for this problem: a probabilistic polynomial-time algorithm that outputs a canonical approximation with high probability. Recently, it was shown that such an algorithm would imply that {\em every approximation algorithm can be made pseudodeterministic} (Dixon, Pavan, Vinodchandran; {\em ITCS 2021}).    

The main conceptual contribution of this work is to establish that the existence of a pseudodeterministic algorithm for \APEP\ is fundamentally connected to the relationship between probabilistic promise classes and the corresponding standard complexity classes. In particular, we show the following equivalence: {\em every promise problem in\PromiseBPP\ has a solution in\BPP\ if and only if $\APEP$ has a pseudodeterministic algorithm}. Based on this intuition, we show that pseudodeterministic algorithms for \APEP\  can shed light on a few central topics in complexity theory such as circuit lowerbounds, probabilistic hierarchy theorems, and multi-pseudodeterminism.

\end{abstract}

\section{Introduction}

\paragraph{Promise Problems:}
A promise problem $\Pi$ is a pair of disjoint sets  $(\Pi_y, \Pi_n)$ of instances.  
Introduced by Even, Selman and Yacobi~\cite{EvenSelmanYacobi84}, promise problems arise naturally in several settings such as hardness of approximations, public-key cryptography, derandomization, and completeness. While much of complexity theory is based on language recognition problems (where every problem instance is either in $\Pi_y$ or in $\Pi_n$), the study of promise problems turned out be an indispensable tool that led to new insights in the area. Many interesting open questions regarding  probabilistic complexity classes can be answered when we consider their promise versions. For example significant questions such as whether derandomization of $\BPP$ implies derandomization of $\MA$, whether derandomization of $\BPP$ implies Boolean circuit lower bounds, whether derandomization of the one-sided-error class $\RP$ implies derandomization of $\BPP$, or whether probabilistic complexity classes have complete problems remain open in the traditional classes. All these questions have 
an affirmative answer if we consider their promise analogues. For example, it is known that derandomizing $\PrBPP$ implies a derandomization of $\MA$~\cite{GoldreichZuckerman11}, and also implies Boolean circuit lower bounds~\cite{ImpagliazzoKabenetsWigderson02}. Similarly, there exist promise problems that are complete for classes such as $\PrBPP$, ${\rm PromiseRP}$, and $\SZK$~\cite{SahaiVadhan03}. We refer the reader to the comprehensive survey article by  Goldreich~\cite{Goldreich06a} for a treatment on the wide-ranging applicability of promise problems.

The role of promise problems in circumventing certain deficiencies of language recognition problems is intriguing. A way to understand the gap between promise problems and languages is by considering {\em solutions} to promise problems.
A set $S$ is a solution to a promise problem $\Pi = (\Pi_y, \Pi_n)$ if $\Pi_y \subseteq S$ and $S \cap \Pi_n = \emptyset$.
A natural question is to investigate the complexity of solutions to a promise problem. Informally, we say that for a complexity class ${\cal C}$ (for example $\BPP$), ${\rm Promise} {\cal C}=\cal{C}$, if every promise problem in ${\rm Promise}{\cal C}$ has a solution in ${\cal C}$.  Intuitively, when ${\rm Promise}{\cal C}$ equals ${\cal C}$, then there is no gap between the class ${\cal C}$ and its promise counterpart.

In this paper we establish a close connection between promise problems and the seemingly unrelated notion of {\em pseudodeterminism}. More concretely, we establish that $\PrBPP = \BPP$ if and only if all probabilistic approximation algorithms can be made pseudodeterministic.  

\paragraph{Pseudodeterminism.}
The notion of a {\em pseudodeterministic} algorithm was introduced by Gat and Goldwasser~\cite{GatGoldwasser11}\footnote{Originally termed Bellagio algorithms}. Informally, a probabilistic algorithm $M$ is pseudodeterministic if for every $x$, there exists a {\em  canonical value} $v$ such that $\Pr[M(x)= v]$ is high. Pseudodeterministic algorithms are appealing in several contexts, such as distributed computing and cryptography, where it is desirable that different invocations of a probabilistic algorithm by different parties should produce the same output. In complexity theory, the notion of pseudodeterminism clarifies the relationship between search and decision problems in the context of randomized computations. It is not known whether derandomizing $\BPP$ to $\P$ implies derandomization of probabilistic search algorithms. However, $\BPP=\P$ implies derandomization of  {\em pseudodeterministic} search algorithms ~\cite{GoldreichGoldwasserRon13}.
Since its introduction, the notion of pseudodeterminism has received considerable attention. Section~\ref{prior} details prior and related work on pseudodeterminism.   

\subsection*{Our Results}

The main conceptual contribution of this paper is that the gap between  $\PrBPP$ and  $\BPP$ can be completely explained by the existence of pseudodeterministic algorithms for \APEP: the problem of approximating the acceptance probability of Boolean circuits additively.  While it is easy to design a probabilistic approximation algorithm for this problem,  we do not know whether there exists a  pseudodeterministic algorithm for this problem.
Very recently the  authors proved this problem {\em complete} for problems that admit approximation algorithms (more generally multi-pseudodeterministic algorithms as defined by Goldreich~\cite{Goldreich19}) in the context of pseudodeterminism~\cite{DPV21}. In particular, they showed that if \APEP\ admits a pseudodeterministic algorithm, then every probabilistic approximation algorithm can be made pseudodeterministic. 
Our connection between pseudodeterminism and promise problems is established via \APEP\ and is stated below.





\medskip
\noindent{\bf Result 1.}{\em \PrBPP\ has a solution in \BPP\ if and only if \APEP\ has a pseudodeterministic approximation algorithm.}

Based on the above result, we obtain results that connect pseudodeterminism to circuit lower bounds, probabilistic hierarchy theorems, and multi-pseudodeterminism. 
\paragraph{\em Circuit lower bounds:} Establishing  lower bounds against fixed polynomial-size circuits has a long history in complexity theory. In this line of work, the focus is on establishing upper bounds on the complexity of languages that can not be solved by any Boolean circuit of a fixed polynomial size. One of the central open questions in this area is to show that \NP\ has languages that cannot be solved by linear-size Boolean circuits. Over the years researchers have made steady progress on this question. Kannan~\cite{Kan82} showed that there are problems in $\Sigma_2^{\P}$ that do not have linear-size circuits (more generally, size $O(n^k)$ for any constant $k$). Later, using techniques from learning theory, this upper bound was improved to $\ZPP^{\NP}$~\cite{BCGKT96,KoblerWatanabe98} and later to  $S_2^{\P}$~\cite{Cai01}. Vinodchandran showed that the class $\PP$ does not have fixed polynomial-size circuits~\cite{Vinodchandran05}. 
Santhanam~\cite{Santhanam09} showed that further progress can be made if we relax the complexity classes to also include {\em promise classes}. In particular, he showed that
$\PromiseMA$ does not have fixed polynomial-size circuits. It is not known whether this result can be improved to  the traditional class $\MA$. 
We show that if \APEP\  has pseudodeterministic algorithms then $\MA$ has languages that can not be solved by $O(n^k)$ size circuits for any $k$.

\medskip
\noindent{\bf Result 2.} {\em If \APEP\ admits pseudodeterministic approximation algorithms, then for any $k$, there are languages in $\MA$ that do not have $O(n^k)$ size Boolean circuits}.  
\smallskip

In fact we show that under the assumption, $\MA = \exists. \BPP$ and thus  $\exists.\BPP$ does not have fixed polynomial-size circuits.  The above result improves the connection between pseudodeterministic algorithms and circuit lower bounds established in~\cite{DixonPavanVinod18}, where it was shown that designing a $\BPP^{\NP}_{tt}$ pseudodeterministic algorithm for problems in $\sharpNP$ would yield super-linear circuit lower bounds for languages in $\ZPP^{\NP}_{tt}$. 

\paragraph{\em Hierarchy theorem for probabilistic classes:}
Some of the most fundamental results in complexity theory are hierarchy theorems -- given more resources, more languages can be recognized. The time hierarchy theorem states that if $T_1(n)\log T_1(n) \in o(T_2(n))$, then there exist languages that can be decided in deterministic time $O(T_2(n))$, but not in deterministic time $O(T_1(n))$~\cite{HennieS66,SHL65}. Similar hierarchy results hold for deterministic space and nondeterministic time~\cite{Cook73,SFM78,Zak83}. Proving hierarchy theorems for probabilistic time is a lot more challenging. There has been significant work in this direction~\cite{Barak02,FortnowSanthanam04,FortnowSanthanamTrevisan05,MelkebeekP06}. All these results use an ``advice bit'', i.e. the results established are of the form ``there is a language in $\BPTIME(T_2(n))/1$ that is not in $\BPTIME(T_1(n))/1$. Removing the advice bit has been a vexing open problem. We show that a pseudodeterministic algorithm for \APEP\ leads to hierarchy theorems for bounded-error probabilistic time.

\medskip
\noindent{\bf Result 3.} {\em If \APEP\ admits pseudodeterministic approximation algorithms, then hierarchy theorems for\BPTIME\ hold. In particular ${\BPTIME(n^\alpha) \subsetneq \BPTIME(n^{\beta})}$ for constant $1\leq \alpha < \beta$.}  

\paragraph{\em Multi-pseudodeterminism:}
Goldreich observed that the problem of estimating the average value of a function over a large universe admits a {\em 2-pseudodeterministic algorithm}: a probabilistic polynomial-time algorithm that outputs {\em two canonical} values with high probability~\cite{Goldreich19}. Motivated by this, Goldreich introduced the notion of {\em multi-pseudodeterminism}~\cite{Goldreich19}. A {\em k-pseudodeterministic} algorithm is a probabilistic-polynomial time algorithm that, for every input $x$, outputs a value from a set $S_x$ of size at most $k$ with high probability (the exact probability bound has to be carefully defined, see Section~\ref{prelims} for a formal definition and \cite{Goldreich19} for justification for the definition).   

In~\cite{DPV21}, the authors show that \APEP\ is a complete problem for functions that admit $k$-pseudodeterministic algorithms for any constant $k$, in the sense that such functions admit pseudodeterministic algorithms if \APEP\ admits a pseudodeterministic algorithm. Here we improve this result to functions that admit $k$-pseudodeterministic algorithms for any polynomial $k$. 

\medskip
\noindent{\bf Result 4.}
{\em If \APEP\ admits a pseudodeterministic approximation algorithm, then
 every multi-valued function $f$ that admits a $k(n)$-pseudodeterministic algorithm, for a polynomial $k(n)$, is in {\rm Search BPP}. Moreover under the assumption, every multi-valued function $f$ that admits a $k(n)$-pseudodeterministic algorithm also admits a pseudodeterministic algorithm}. 

\paragraph{\em Concurrent Work:} In an independent and recent work, Lu, Oliveria, Santhanam~\cite{LuOlivieraSanthanam21} also explored the consequences of pseudodeterministic algorithms for \APEP\ (they use CAPP to denote \APEP). There is some intersection between their work and ours. In particular, they also establish results on probabilistic hierarchy. 
They showed that if  there is a pseudodeterministic algorithm  for \APEP\ that is correct on average at infinitely many input lengths, then the hierarchy theorems for $\BPTIME$ follow. Note that our work  considers existence of  pseudodeterminitic algorithms for \APEP\  in the worst-case. The rest of the work is different. 
Their work has results that include designing pseudodeterministic  pseudorandom  generators, and an equivalence between probabilistic hierarchy theorems and pseudodeterministic algorithms for constructing strings with large $rKt$ complexity, which we do not have.
Their work did not explore the relationships of  pseudodeterministic algorithms with promise problems, circuit lowerbounds, and multi-pseudodeterminism which we establish.

\ignore{
Consider the search problem of producing a witness that  two multi-variate polynomials $f$ and $g$ over a field are different. A simple probabilistic polynomial-time algorithm for this problem randomly picks an element  $t$ from the domain and outputs it if $f(t)\neq g(t)$.  Even though this algorithm is simple and efficient and the error probability can be made arbitrarily small, Gat and Goldwasser~\cite{GatGoldwasser11} pointed out a deficiency: two different runs of the algorithm can produce two different witnesses with very high probability. Well-known probabilistic algorithms for several search problems, such as finding a large prime number or computing generators of cyclic groups, also exhibit this deficiency. This raises the question of whether we can design a  probabilistic algorithm for search problems that will output the same witness on multiple executions, with high probability.}

\subsection{Prior and Related Work on Pseudodeterminism}\label{prior}
One line of research on pseudodeterminism  has focused on designing pseudodeterministic algorithms for concrete problems. Gat and Goldwasser designed polynomial-time pseudodeterministic algorithms for various  algebraic problems such as finding quadratic non-residues and finding non-roots of multivariate polynomials~\cite{GatGoldwasser11}. Goldwasser and Grossman exhibited a pseudodeterministic NC algorithm for computing matchings in bipartite graphs~\cite{GoldwasserGrossman17}. Recently, Anari and Vazirani~\cite{AV20} improved this result general graphs.  Grossman designed a pseudodeterministic algorithm for computing primitive roots whose runtime matches the best known Las Vegas algorithm~\cite{Grossman15}. Oliveira and Santhanam~\cite{OliveiraSanthanam17} designed  a sub-exponential time pseudodeterministic algorithm for generating primes that works at infinitely many input lengths. Subsequently, Oliveira and Santhanam also showed that \APEP\ admits a subexponential-time pseudodeterministic algorithm that is correct on average at infinitely many input lengths~\cite{OliveiraSanthanam18}.
Goldreich, Goldwasser and Ron~\cite{GoldreichGoldwasserRon13}, and later Holden~\cite{Holden17}, investigated the possibility of obtaining pseudodeterministic algorithms for BPP search problems. 

Other lines of  work  extended the notion of  pseudodeterminism to several other scenarios including interactive proofs,  streaming and sublinear algorithms, and learning algorithms ~\cite{GoldwasserGrossmanHolden17,GoemansGoldwasserHolden20, GGMW20, GoldreichGoldwasserRon13, OliveiraSanthanam18}.  
The works of Grossman and Liu, and Goldreich introduced generalizations of pseudodeterminism such as {\em reproducible algorithms}, {\em influential bit algorithms}, and {\em multi-pseudodeterministic algorithms}~\cite{GrossmanLiu19,Goldreich19}.
Very recently the  authors exhibited {\em complete problems} for functions that admit approximation algorithms, more generally multi-pseudodeterministic algorithms as defined by Goldreich~\cite{Goldreich19}, in the context of pseudodeterminism~\cite{DPV21}.


\section{Preliminaries}
\label{prelims}
In this paper, we are concerned with additive error approximations. A probabilistic algorithm $A$ is an $(\varepsilon, \delta)$-additive approximation algorithm for a function $f:\{0,1\}^* \rightarrow \mathbb{R}$ if the probability that $A(x) \in [f(x) - \varepsilon$, $f(x) + \varepsilon]$ is at least $1-\delta$.

\subsection{Pseudodeterminism}
\begin{definition}
\noindent{\sc Acceptance Probability Estimation Problem:} $\APEP_{(\varepsilon,\delta)}:$ Given a Boolean circuit $C:\{0, 1\}^n \rightarrow \{0, 1\}$, give an $(\varepsilon,\delta)$-additive approximation 
for $\Pr_{x \in U_n} [C(x) = 1]$.
\end{definition}

\begin{definition}[\cite{GatGoldwasser11},\cite{Goldreich19}]

Let $f$ be a multivalued function, i.e. $f(x)$ is a non-empty set. We say that $f$ admits pseudodeterministic algorithms if there is a probabilistic polynomial-time algorithm $A$ such that for every $x$, there exists a $v\in f(x)$ such that $A(x) = v$ with probability at least $2/3$.
$f$ admits $k$-pseudodeterministic algorithms if there is a probabilistic polynomial-time algorithm $A$ such that for every $x$, there exists a set $S_x \subseteq f(x)$ of size at most $k$ and the probability that $A(x) \in S(x)$ is at least $\frac{k+1}{k+2}$.
\end{definition}
Note that the above definition captures pseudodeterminism for approximation algorithms, as approximation algorithms can be viewed as multivalued functions.
It is known that any function that admits an $(\varepsilon,\delta)$ approximation algorithm admits a $(2\varepsilon,\delta)$ 2-pseudodeterministic algorithm (see~\cite{Goldreich19,DPV21} for a proof). 

\begin{proposition}\label{2pseudo}
For every $0< \varepsilon,\delta< 1$, there is a 2-pseudodeterministic algorithm for $\APEP_{(\varepsilon,\delta)}$.
\end{proposition}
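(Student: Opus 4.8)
The plan is to combine the obvious Monte Carlo estimator for $\APEP$ with the standard ``round to a coarse grid'' trick; this is exactly the argument behind the fact, quoted just above the proposition, that an $(\varepsilon,\delta)$-approximation algorithm yields a $(2\varepsilon,\delta)$ $2$-pseudodeterministic one, so one could also simply invoke that fact with $\varepsilon$ replaced by $\varepsilon/2$. First I would note that $\APEP_{(\varepsilon',\delta')}$ has a probabilistic algorithm for \emph{every} $\varepsilon',\delta'>0$: draw $m = O(\varepsilon'^{-2}\log(1/\delta'))$ inputs $x_1,\dots,x_m$ uniformly from $U_n$, evaluate the circuit on each, and output $\tilde p = \frac1m\sum_i C(x_i)$; by a Chernoff/Hoeffding bound $\Pr[\,|\tilde p - p|\le \varepsilon'\,]\ge 1-\delta'$, where $p = \Pr_{x\in U_n}[C(x)=1]$. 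I will instantiate this with $\varepsilon' = \varepsilon/4$ and $\delta' = 1/4$.

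The $2$-pseudodeterministic algorithm $A$ then is: compute $\tilde p$ as above and return the multiple of $\varepsilon$ nearest to $\tilde p$, i.e.\ $A(C) = \varepsilon\cdot\big\lfloor \tilde p/\varepsilon + 1/2\big\rfloor$, using a fixed tie-breaking rule so that the output is a deterministic function of $\tilde p$. For the analysis, let $E$ be the event that $\tilde p$ lands in the interval $I = [\,p-\varepsilon/4,\ p+\varepsilon/4\,]$, which has length $\varepsilon/2$ and probability $\ge 3/4$. The rounding map partitions $\mathbb{R}$ into the half-open cells $[(k-\tfrac12)\varepsilon,\ (k+\tfrac12)\varepsilon)$, each of length $\varepsilon$, and an interval of length $\varepsilon/2$ can meet at most two consecutive cells; hence, on $E$, the output $A(C)$ lies in a set $S_C$ of size at most $2$ that is determined solely by where $p$ sits relative to the grid, and in particular is independent of the algorithm's coins. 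Every $q\in S_C$ is within $\varepsilon/2$ of some point of $I$, hence within $\varepsilon/2+\varepsilon/4 = 3\varepsilon/4 < \varepsilon$ of $p$, so $q$ is a legal answer to $\APEP_{(\varepsilon,\delta)}$; thus $S_C$ is a size-$\le 2$ subset of the valid outputs and $\Pr[A(C)\in S_C]\ge 3/4 = \tfrac{2+1}{2+2}$, which is precisely the $2$-pseudodeterministic guarantee for $k=2$.

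The only point that needs care is the parameter bookkeeping in the last step: the grid spacing must be coarse enough (at least twice the sampling radius, and here $\varepsilon \ge 2\cdot\varepsilon/4$ with room to spare) that the length-$\varepsilon/2$ confidence window straddles at most two cells, yet fine enough that every surviving grid point remains within the target error $\varepsilon$ of $p$; the choice $\varepsilon' = \varepsilon/4$ makes both hold comfortably, and slight variations of the constants work equally well. I would also remark that $\delta$ imposes no extra constraint here: once the algorithm is $2$-pseudodeterministic the relevant confidence is the fixed value $3/4$, and we obtained $1-\delta' = 3/4$ directly from sampling, so the conclusion holds for all $0<\varepsilon,\delta<1$.
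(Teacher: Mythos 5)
Your proof is correct and is essentially the standard grid-rounding argument that the paper itself invokes (by citation to Goldreich and to \cite{DPV21}) rather than reproving: run the Monte Carlo estimator with accuracy a constant fraction of $\varepsilon$, round to a $\varepsilon$-spaced grid, and observe that the confidence window meets at most two cells whose centers are both valid answers, giving $\Pr[A(C)\in S_C]\ge 3/4=\tfrac{2+1}{2+2}$. The parameter bookkeeping ($\varepsilon'=\varepsilon/4$, $\delta'=1/4$) checks out, and your remark that $\delta$ plays no role once the $2$-pseudodeterministic success threshold is fixed at $3/4$ is accurate.
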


Gat and Goldwasser proved the following  characterization~\cite{GatGoldwasser11}.

\begin{theorem}\label{thm:GGR}
A function admits a pseudodeterministic algorithm if and only if it is computable in $\PF^{\BPP}$.  
\end{theorem}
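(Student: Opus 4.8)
The plan is to establish the two directions of the equivalence by routine amplification arguments, exploiting that the $\BPP$ oracle is a genuine language (no promise), so no subtleties of the kind discussed elsewhere in the paper intrude.

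For the direction ``computable in $\PF^{\BPP}$ implies pseudodeterministic'': suppose $f$ is computed by a deterministic polynomial-time oracle machine $M$ with a $\BPP$ language oracle $L$, and let $p$ be a polynomial bounding the running time of $M$ (hence the number of its oracle queries). Using standard majority-vote amplification, fix a $\BPP$ algorithm $B$ for $L$ whose error probability is at most $1/(3p(\cdot))$ on the relevant inputs. The candidate pseudodeterministic algorithm $A$ simulates $M(x)$, answering each oracle query $q$ by one invocation of $B(q)$. I would bound the failure probability by a union bound over the at most $p(|x|)$ query steps, conditioning inductively: as long as all earlier answers are correct, the query issued at the current step is exactly the query $M^{L}$ would issue, so it is answered incorrectly with probability at most $1/(3p(|x|))$. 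Thus with probability at least $2/3$ all answers along the simulated path agree with $L$, and then $A(x)$ outputs $M^{L}(x)$, a single value in $f(x)$ depending only on $x$; this is the canonical output.

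For the converse: let $A$ be a pseudodeterministic algorithm for $f$ running in time $q(n)$, and for each $x$ let $v(x)\in f(x)$ be the unique value with $\Pr[A(x)=v(x)]\ge 2/3$ (uniqueness since $2/3+2/3>1$), noting $|v(x)|\le q(|x|)$. Define a language $L$ which, on input $\langle x,i\rangle$ with $1\le i\le q(|x|)$, records the $i$-th bit of $v(x)$, where we pad each output of $A$ to length exactly $q(|x|)$ using a reserved end-marker symbol so that $v(x)$ is recoverable from this sequence of bits. Then $L\in\BPP$: on $\langle x,i\rangle$ run $A(x)$ once and report the $i$-th bit of its padded output; since $A(x)=v(x)$ with probability at least $2/3$, so does this bit. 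The $\PF^{\BPP}$ machine then queries $\langle x,1\rangle,\dots,\langle x,q(|x|)\rangle$, reassembles $v(x)$, and outputs it; this is deterministic polynomial time relative to the $\BPP$ oracle $L$, so $f\in\PF^{\BPP}$.

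The routine parts --- amplifying the oracle and extracting the output bits --- are immediate; the step that needs a little care is the union bound in the first direction, because $M$'s queries are adaptive, so which queries are ``actually asked'' is itself random. The inductive conditioning above handles this: while every answer so far matches $L$, the next query coincides with the deterministic query of $M^{L}$, so the per-step error bound applies and the bad events compose to at most $1/3$ overall (equivalently, one may quote $\BPP^{\BPP}=\BPP$ as a black box). A secondary point worth stating is that ``the function computed in $\PF^{\BPP}$'' is the single-valued selector $x\mapsto v(x)$, which is exactly what the definition of a pseudodeterministic algorithm delivers for a multivalued $f$.
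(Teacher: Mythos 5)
Your proof is correct and is the standard argument for this equivalence; the paper itself states Theorem~\ref{thm:GGR} as a known result of Gat and Goldwasser and gives no proof, and your two directions (simulating the oracle machine with an amplified $\BPP$ decider plus an inductive union bound over adaptive queries, and encoding the bits of the canonical value as a $\BPP$ language queried by a $\PF$ machine) are exactly the argument in the cited source.
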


\begin{definition}[\SearchBPP~\cite{Goldreich11}]
 A search problem is a relation $R \subseteq \{0,1\}^* \times \{0,1\}^*$. For every $x$, the witness set $W_x$ of $x$ with respect to $R$ is $W_x=\{y\mid(x,y) \in R\}$. 
 A search problem  $R$ is in $\SearchBPP$ if
 \begin{enumerate}
    \item For every $x$, there is an efficient probabilistic algorithm to output an element of $W_x$: i.e. 
    there exists a probabilistic polynomial-time algorithm
$A$ such that for every $x$ for which $W_x\ne \phi$, $A(x) \in W_x$  with probability $\geq 2/3$, and 
\item $R\in \BPP$: i.e.
there exists a probabilistic polynomial-time algorithm $B$ such that if $(x,y)\in R$, then $B(x,y)$ accepts with probability $>2/3$, and if $(x,y)\not\in R$ then $B(x,y)$ accepts with probability $<1/3$.
\end{enumerate}
\end{definition}

\begin{definition}
For a multivalued function $f$, we say that $f$ is in $\SearchBPP$ if there is a relation $R$ in  $\SearchBPP$ so that $\forall x$, the witness set $W_x \neq \phi$ and  $W_x \subseteq f(x)$. 
\end{definition}

Dixon, Pavan and Vinodchandran~\cite{DPV21} proved that \APEP\  is a complete problem for pseudodeterministic approximation algorithms and pseudodeterministic $\SearchBPP$ in the following sense.

\begin{theorem}\label{thm:complete}
If $\APEP_{(1/100,1/8)}$  admits a pseudodeterministic algorithm then 
\begin{enumerate}
\item every function $f$ that has an  $(\varepsilon,\delta)$-approximation algorithm has a pseudodeterministic $(3\varepsilon, \delta)$-approximation algorithm. 
\item every problem in \SearchBPP\  has a pseudodeterministic algorithm.
\end{enumerate}
\end{theorem}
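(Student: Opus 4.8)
The plan is to handle both parts with a single mechanism. A pseudodeterministic algorithm for \APEP\ gives us \emph{canonical} additive estimates of the acceptance probability of any poly‑size circuit (and, applying part~(1) to the trivial sampling approximator, we may take the additive error of these canonical estimates to be any $1/\poly$, though $1/100$ suffices below). We then use such estimates to binary‑search for a canonical quantile of a monotone ``CDF‑type'' function. The reason the search survives the estimation error is structural: every comparison is against the \emph{fixed} constant $1/2$, never against a shrinking quantity, so the search decisions are forced everywhere except on an $O(\varepsilon)$‑size window around the target.

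\noindent\emph{Part (1).} Let $A$ be an $(\varepsilon,\delta)$-approximation algorithm for $f$. Amplifying $A$ by taking medians of independent runs (this needs only $\delta<1/2$, which we may assume) keeps accuracy $\varepsilon$ while driving the error probability below $1/100$. For a threshold $t$ on the grid $\varepsilon\mathbb{Z}$, restricted to the $2^{\poly}$ grid points inside the range of $A$, let $C_{x,t}$ be the poly‑size circuit that simulates $A(x;r)$ and accepts iff its output is $\le t$, so $\Pr[C_{x,t}=1]=G(t):=\Pr_r[A(x;r)\le t]$ is nondecreasing in $t$. Running the (amplified) pseudodeterministic \APEP\ algorithm on $C_{x,t}$ yields a canonical $\hat G(t)$ within $1/100$ of $G(t)$. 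Since $A$ lands within $\varepsilon$ of $f(x)$ with probability $\ge 1-\delta\ge 0.99$, we get $\hat G(t)<1/2$ for every grid point $t\le f(x)-\varepsilon$ and $\hat G(t)\ge 1/2$ for every grid point $t\ge f(x)+\varepsilon$. Binary‑search for the least grid point $\hat t$ with $\hat G(\hat t)\ge 1/2$: outside $(f(x)-\varepsilon,\,f(x)+\varepsilon)$ every comparison is forced, so the search cannot leave an $O(\varepsilon)$ neighbourhood of $f(x)$ and the returned $\hat t$ satisfies $|\hat t-f(x)|<2\varepsilon$; moreover $\hat t$ is a deterministic function of the canonical values $\hat G(\cdot)$, hence canonical. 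This is a pseudodeterministic $(3\varepsilon,\delta)$-approximation (the ``$3$'' comfortably absorbs the $2\varepsilon$ window and the grid spacing).

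\noindent\emph{Part (2).} Let $R\in\SearchBPP$ with sampler $A$ and verifier $B$. Amplify $B$ to exponentially small error, and amplify $A$ by running it many times and returning the first output $B$ accepts, obtaining a sampler $A'$ with $\Pr[A'(x)\in W_x]\ge 1-2^{-m}$ whenever $W_x\neq\emptyset$ ($m$ a bound on witness length), where on the rare failure $A'$ outputs a fixed sentinel string $s$ that we arrange (by padding) to exceed every element of $W_x$. Let $D_x$ be the distribution of $A'(x)$ and $F(w)=\Pr[A'(x)\le w]$. The lex‑median $w^{\mathrm{med}}=\min\{w:F(w)\ge1/2\}$ is a jump point of $F$, so $w^{\mathrm{med}}\in\mathrm{supp}(D_x)$; and since $s$ has mass $\le 2^{-m}<1/2$ the median cannot be $s$, whence $w^{\mathrm{med}}\in W_x$. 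As in part~(1), estimate $F(w)$ canonically via the circuit ``simulate $A'(x;r)$, accept iff $B$ accepts the output and the output is $\le w$'' together with pseudodeterministic \APEP, and binary‑search for a canonical $\hat w$ with $\hat F(\hat w)\ge1/2$; if $B$ confirms $\hat w\in W_x$, output $\hat w$.

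\noindent\emph{The main obstacle} is exactly that last sentence: unlike in part~(1), proximity is not enough here, since the binary search may return a string $\hat w\notin W_x$ lying strictly between two consecutive witnesses --- the absolute $1/\poly$ error of \APEP\ cannot resolve the (possibly $2^{-\poly}$‑small) jumps of $F$ when $D_x$ is spread thinly over exponentially many witnesses --- and re‑running the search on $[\,\cdot\,,\hat w-1]$ only halves the residual mass, so it stalls once that mass drops below the estimation error. The fix I would pursue is to first preprocess the sampler so its output distribution is supported on only $O(1)$ ``heavy'' witnesses of mass $\ge 1/100$ each --- for instance, hash $\{0,1\}^m$ into a constant number of buckets and use $B$ together with the canonical \APEP\ estimates to certify when a bucket has become light enough to be resolved down to a single witness --- after which the canonical heavy‑median witness is located robustly by the same threshold binary search. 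Getting this preprocessing to work (in particular, certifying single‑witness isolation using only additive $1/\poly$‑precision probability estimates) is the crux; once it is in place, the threshold search outputs a canonical element of $W_x$ with high probability, which is exactly a pseudodeterministic algorithm for $R$.
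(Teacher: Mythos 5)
First, a remark on scope: the paper does not prove Theorem~\ref{thm:complete} at all --- it imports the statement from \cite{DPV21} --- so the comparison below is against the argument of that cited work rather than a proof appearing in this document.

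Your Part~(1) is essentially correct and essentially the standard argument: build the circuits computing the CDF of the approximator's output, get canonical additive estimates from the pseudodeterministic \APEP\ algorithm, and select a canonical quantile by thresholding against the fixed constant $1/2$; since all comparisons outside an $O(\varepsilon)$-window are forced, the selected grid point is within $2\varepsilon$ of $f(x)$ and is a deterministic function of the canonical estimates. (The aside about bootstrapping the \APEP\ precision via Part~(1) is circular but also unnecessary, since $1/100$ precision is exactly what the hypothesis supplies; and the amplification of $\delta$ below $1/100$ needs $\delta$ bounded away from $1/2$, which is the intended reading.)

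Part~(2) has a genuine gap, which you candidly flag yourself, and the fix you sketch does not work. Searching the \emph{output space} of the sampler cannot succeed: when the witness distribution is spread over exponentially many witnesses, every atom of the CDF $F$ has exponentially small mass, so additive $1/\poly$ estimates can never certify that a candidate $\hat w$ is a support point, and hashing $\{0,1\}^m$ into $O(1)$ buckets leaves every bucket with exponentially many exponentially light witnesses --- the ``certify single-witness isolation'' step you identify as the crux is information-theoretically out of reach for additive estimation. The missing idea is to run the greedy prefix search over the sampler's \emph{random coins} rather than over its outputs. Let $A$ use $\ell=\poly(n)$ coins and let $B$ be the verifier amplified to error $2^{-n}$. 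For a prefix $u$ of the random string, let $c_u=\Pr_{r'',s}\bigl[B\bigl(x,A(x;ur'');s\bigr) \mbox{ accepts}\bigr]$; this is the acceptance probability of a polynomial-size circuit, hence canonically estimable to within $\varepsilon=1/(100\ell)$ by the (amplified) pseudodeterministic \APEP\ algorithm. Crucially $c_u=\tfrac12(c_{u0}+c_{u1})$ because the coins are uniform, so choosing the child with the larger canonical estimate loses only $2\varepsilon$ \emph{additively} per level --- there is no halving. Starting from $c_{\emptyset}\geq 2/3-2^{-n}$, after $\ell$ levels one reaches a full coin string $r^*$ with $c_{r^*}\geq 2/3-2\varepsilon\ell-2^{-n}>1/3$, which by the soundness of $B$ forces $A(x;r^*)\in W_x$; the output $A(x;r^*)$ is canonical because the entire descent is determined by the $2\ell$ canonical \APEP\ values. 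This coin-space descent is exactly what makes the \cite{DPV21} completeness proof (following Goldreich's search-to-decision reduction) go through, and it is the step your proposal is missing.
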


It is well known that for every  $0 < \varepsilon,\delta < 1$, there is a probabilistic algorithm for $\APEP_{(\varepsilon,\delta)}$ that runs in time $\poly(n,1/\varepsilon, \log 1/\delta)$ where $n$ is the input length.  Thus by the above result, we obtain the following proposition.

\begin{proposition}\label{prop:complete}
If $\APEP_{(1/100,1/8)}$ has a pseudodeterministic algorithm then for every $0< \varepsilon,\delta < 1$, $\APEP_{(\varepsilon,\delta)}$ has a pseudodeterministic algorithm.  
\end{proposition}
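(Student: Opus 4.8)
The plan is to read off the statement from Theorem~\ref{thm:complete}(1), which already handles an arbitrary approximation algorithm for an arbitrary real-valued function; the only thing to supply is a plain probabilistic approximation algorithm for the acceptance-probability function with suitably sharpened parameters.

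Fix constants $0<\varepsilon,\delta<1$ and let $f$ be the function sending (an encoding of) a Boolean circuit $C:\{0,1\}^n\to\{0,1\}$ to $\Pr_{x\in U_n}[C(x)=1]$. First I would recall the standard sampling estimator: draw $m=\Theta(\varepsilon^{-2}\log(1/\delta))$ independent uniform inputs $x_1,\dots,x_m$, evaluate $C$ on each, and output $\frac1m\sum_{i\le m} C(x_i)$. A Hoeffding bound shows this estimate lies in $[\,f(C)-\varepsilon/3,\ f(C)+\varepsilon/3\,]$ with probability at least $1-\delta$, and since $\varepsilon$ and $\delta$ are constants the running time is polynomial in the input length. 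Hence $f$ has an $(\varepsilon/3,\delta)$-approximation algorithm.

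Now I would apply Theorem~\ref{thm:complete}(1) with its parameter instantiated as $(\varepsilon/3,\delta)$: assuming $\APEP_{(1/100,1/8)}$ admits a pseudodeterministic algorithm, $f$ has a pseudodeterministic $(3\cdot(\varepsilon/3),\delta)=(\varepsilon,\delta)$-approximation algorithm, i.e.\ a probabilistic polynomial-time algorithm that on input $C$ outputs, with high probability, a canonical value within $\varepsilon$ of $\Pr_{x\in U_n}[C(x)=1]$. This is exactly a pseudodeterministic algorithm for $\APEP_{(\varepsilon,\delta)}$. There is no real obstacle here beyond parameter bookkeeping: the entire content lies in Theorem~\ref{thm:complete}, and the only subtlety is to absorb its factor-$3$ blow-up by starting from an $(\varepsilon/3,\delta)$-estimator rather than an $(\varepsilon,\delta)$-estimator (the same argument in fact works whenever $1/\varepsilon$ and $\log(1/\delta)$ are polynomially bounded, but the proposition only asks for constants).
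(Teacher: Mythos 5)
Your proof is correct and follows essentially the same route as the paper: invoke the standard sampling-based approximation algorithm for acceptance probability and then apply Theorem~\ref{thm:complete}(1). You are in fact slightly more careful than the paper's one-line argument, since you explicitly start from an $(\varepsilon/3,\delta)$-estimator to absorb the factor-$3$ loss in the completeness theorem, a bookkeeping step the paper leaves implicit.
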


\noindent{\bf\em Remark.} In the rest of the  paper, we use the phrase ``\APEP\ has a pseudodeterministic algorithm'' in place of ``$\APEP_{(1/100, 1/8)}$ admits a pseudodeterministic algorithm'', and denote the presumed pseudodeterministic algorithm with  \algoapep.

\subsection{Promise Problems}
\begin{definition}
A promise problem $\Pi= (\Pi_y, \Pi_n) \in \PrBPP$ if  there exists a probabilistic polynomial-time machine $M$ such that $\forall x$
\[x\in \Pi_y \Leftrightarrow \Pr[M(x) = \mbox{ accepts}] \geq 2/3,\]
\[x\in \Pi_n \Leftrightarrow \Pr[M(x) = \mbox{ accepts}] <1/3,\]
\end{definition}

We can similarly define promise classes such as $\PromiseMA$.

\begin{definition}
Let ${\cal C}$ be a complexity class. We say that a promise $(\Pi_y, \Pi_n)$
has a solution in ${\cal C}$ if there exists a language $L$ in ${\cal C}$ such that $\Pi_y \subseteq L$ and $L \cap \Pi_n = \emptyset$.
\end{definition}
\begin{definition}

Let $ \Pi = (\Pi_y,\Pi_n)$ be a promise problem. $\Pi' =  \exists  \cdot \Pi$ is a promise problem $(\Pi'_y,\Pi'_n)$ defined as follows. There is a polynomial $p$ such that $\forall x$

\[x \in \Pi'_y \Leftrightarrow  \exists w \in \{0,1\}^{p(|x|)}, \langle x,w \rangle \in \Pi_y 
\]
\[
 x \in \Pi'_n  \Leftrightarrow \forall w \in \{0,1\}^{p(|x|)}, \langle x,w \rangle \in \Pi_n
\]
\end{definition}
\begin{definition}
We say that a promise problem $\Pi = (\Pi_y, \Pi_n) \in \exists \cdot {\PrBPP}$ if there is a promise problem $\Pi' \in \PrBPP$ such that $\Pi = \exists \cdot \Pi'$.
\end{definition}

\begin{definition}
A probabilistic polynomial-time machine $M$ has $\BPP$-type behaviour if on every input $x$, $\Pr[M(x) \mbox{ accepts} ]$ is either $\geq 2/3$ or $< 1/3$.
\end{definition}

\section{Consequences of Pseudodeterministic Algorithm for \APEP}

\subsection{Promise Problems}
\begin{theorem}\label{thm:prbpp-bpp}
$\PrBPP$ has a solution in $\BPP$ if and only if \APEP\ has a pseudodeterministic approximation algorithm.
\end{theorem}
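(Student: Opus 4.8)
The plan is to prove the two directions separately, with the forward direction ($\PrBPP = \BPP \Rightarrow$ pseudodeterministic \APEP) being the conceptually cleaner one and the reverse direction ($\APEP$ pseudodeterministic $\Rightarrow \PrBPP = \BPP$) being where the real work lies. For the forward direction, I would start from the standard 2-pseudodeterministic algorithm for $\APEP$ guaranteed by Proposition~\ref{2pseudo}: for a circuit $C$ it outputs, with high probability, one of at most two canonical values $v_1 < v_2$ that are both good additive approximations (and close to each other). The obstacle to full pseudodeterminism is breaking the tie between $v_1$ and $v_2$ canonically. The idea is to phrase ``which of the two values is the right one'' as a promise problem: on input $(C, v_1, v_2)$ (or some canonical encoding obtained by running the 2-pd algorithm's auxiliary computation), the promise is that the estimate is either ``close to $v_1$'' or ``close to $v_2$,'' and this gap can be certified by a $\PrBPP$ machine (sample-and-count, comparing against the midpoint). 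Under the hypothesis $\PrBPP=\BPP$, this promise problem has a $\BPP$ solution $L$; querying $L$ lets the algorithm deterministically (up to $\BPP$ error) pick $v_1$ or $v_2$, yielding a single canonical output. I would need to be careful that the pair $(v_1,v_2)$ itself is canonical — i.e., that the 2-pd algorithm produces a canonical set $S_x$, not just a canonical-sized one — and that the promise-problem instance fed to $L$ is itself canonical; this is the main technical point to nail down in this direction.

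For the reverse direction, assume \APEP\ has a pseudodeterministic algorithm \algoapep\ (and hence, by Proposition~\ref{prop:complete}, for all $(\varepsilon,\delta)$). Let $\Pi = (\Pi_y,\Pi_n) \in \PrBPP$ via a probabilistic machine $M$ that, under the promise, accepts with probability $\geq 2/3$ or $< 1/3$. The plan is: on input $x$, build the Boolean circuit $C_x(r)$ that simulates $M(x)$ on randomness $r$, so that $\Pr_r[C_x(r)=1] = \Pr[M(x)\text{ accepts}]$; then run \algoapep\ on $C_x$ with accuracy parameters small enough (say $\varepsilon = 1/8$) to separate the two cases. Because \algoapep\ is pseudodeterministic, for each $x$ it outputs a single canonical value $p_x$ with high probability, and under the promise $p_x$ is within $\varepsilon$ of the true acceptance probability. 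Define $L = \{x : p_x > 1/2\}$ (using the canonical value). This $L$ is decidable in $\BPP$ — run \algoapep, look at whether the (canonical, hence reproducible) output exceeds $1/2$ — and on promise instances it correctly separates $\Pi_y$ from $\Pi_n$ since the canonical value is a good approximation there. Hence $\Pi$ has a solution in $\BPP$.

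The subtle point in the reverse direction, and the one I'd flag as the main obstacle overall, is the behavior of \algoapep\ on circuits $C_x$ arising from non-promise instances $x$ — where $\Pr_r[C_x(r)=1]$ may lie anywhere in $[1/3, 2/3]$. This is exactly where languages normally fail and promise problems succeed. The resolution is that \algoapep\ is pseudodeterministic on \emph{every} input (it is a worst-case pseudodeterministic algorithm for the total multivalued function $\APEP$), so even on a ``bad'' circuit it still has \emph{some} canonical output value $p_x$ with probability $\geq 2/3$; we simply don't care whether $p_x > 1/2$ or not when $x$ violates the promise, because then $x \notin \Pi_y$ and $x \notin \Pi_n$, so $L$'s answer on $x$ is unconstrained. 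The key is that $L$ is still a well-defined language in $\BPP$ — membership is determined by a value that is reproducible with high probability — even though on promise-violating inputs that value carries no semantic meaning. I would make sure to state this cleanly: pseudodeterminism converts the ``multivalued on all inputs'' object $\APEP$ into a ``single-valued on all inputs'' object computable in $\PF^{\BPP}$ (Theorem~\ref{thm:GGR}), and it is precisely this totality that upgrades a promise solution into a genuine $\BPP$ language.
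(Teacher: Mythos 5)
Your reverse direction (\APEP\ pseudodeterministic $\Rightarrow$ \PrBPP\ has a solution in \BPP) is essentially identical to the paper's argument: build $C_x$ from the \PrBPP\ machine, run \algoapep, threshold the canonical output at $1/2$, and observe that pseudodeterminism on \emph{every} input gives the resulting machine $\BPP$-type behaviour even off the promise. That part is fine.

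The forward direction is where there is a genuine gap, and it is exactly the point you flagged but did not resolve. Your plan feeds a triple $(C,v_1,v_2)$ to a promise problem, which requires the candidate pair to be \emph{canonically computable}; but the $2$-pseudodeterministic algorithm of Proposition~\ref{2pseudo} only guarantees that a canonical two-element set $S_C$ \emph{exists} --- identifying its two members is as hard as the original problem, and any attempt to extract them by running the algorithm is itself randomized and non-canonical. The paper's proof avoids this entirely with two moves you are missing. First, it post-processes the $2$-pseudodeterministic algorithm $M$ so that every output is rounded to a multiple of $\varepsilon$; the universe of possible answers is then a \emph{fixed} grid $\{0,\varepsilon,2\varepsilon,\dots,1\}$ of polynomially many points, known in advance without running anything. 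Second, the promise problem is defined not on pairs of candidates but on single pairs $\langle C,v\rangle$, with $\Pi_y$ the instances where $M(C)$ outputs $v$ with probability at least $3/8$ and $\Pi_n$ those where it does so with probability at most $1/4$ (the gap $3/8$ vs.\ $1/4$ comes from the $2$-pseudodeterministic success probability $3/4$ split over at most two values). The final algorithm queries the assumed \BPP\ solution $L_\Pi$ on $\langle C,k\varepsilon\rangle$ for every grid point in a fixed order and outputs the first accepted one; correctness needs only that some grid point lies in $\Pi_y$ (so the search succeeds) and that every point outside $\Pi_n$ is within $2\varepsilon$ of $p_C$ (so whatever is returned is a valid approximation). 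This yields a $\PF^{\BPP}$ algorithm, which Theorem~\ref{thm:GGR} converts into a pseudodeterministic one. Without the discretization-plus-exhaustive-search device, your ``tie-breaking between $v_1$ and $v_2$'' formulation does not compile into an algorithm.
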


\begin{proof}
$(\Leftarrow):$
We will first prove that if \APEP\ has a pseudodeterministic algorithm, then $\PrBPP$ has a solution in $\BPP$.  Let $\Pi$ be a promise problem in $\PrBPP$ and let $M$ be a probabilistic polynomial-time machine that witnesses this. Given $x$, let $C_x$ be the following Boolean circuit: 
\[
C_x(r) = 1 \mbox{ if and only if $M(x)$ on random string $r$ accepts}.  
\]

Note that given $x$, we can  construct $C_{x}$ in time $\poly(|x|)$.  Consider the following probabilistic algorithm:

\medskip
\noindent{\bf Algorithm $B$:}
On input $x$, construct $C_{x}$ and run $\algoapep(C_{x})$. If $\algoapep(C_{x})\geq 1/2$, accept; else reject.
\smallskip

\begin{claim}
$B$ has a  $\BPP$-type behavior.
\end{claim}

\begin{proof}
Let $x$ be an input to $B$. Recall that $\algoapep$ is a pseudodeterministic approximation algorithm that outputs a canonical value $v$ on input $C_x$ with probability at least $7/8$. So either with probability at least $7/8$, $v$ is $\geq 1/2$, in which case $B$ accepts $x$, or with probability at least $7/8$, $v$ is $< 1/2$ and $B$ rejects. Thus for every input $x$, $B$ either accepts with probability $\geq 7/8$ or rejects with probability $\geq 7/8$, and thus $B$ has $\BPP$-type behaviour. 
\end{proof}

Let $L$ be the language accepted by the above machine. Then by the above claim $L\in \BPP$.

\begin{claim}
$L$ is a solution to the promise problem $\Pi$.
\end{claim}

\begin{proof}
Let $x$ be a string in $\Pi_y$. Thus $\Pr[C_x(r)= 1] \geq 2/3$. Thus $\algoapep(C_{x})$ outputs a canonical value $v \geq 2/3 - 1/100 > 1/2$ with probability at least $7/8$, and thus $B$ accepts with probability at least $7/8$, and thus $x \in L$.

Suppose $x \in \Pi_n$. Thus 
Thus $\Pr[C_x(r)= 1] < 1/3$.
Thus $\algoapep(C_{x})$ outputs a canonical value $v \leq  1/3 + 1/100 < 1/2$ with probability at least $7/8$, and thus $B$ rejects  with probability at least $7/8$, and thus $x \notin L$.
\end{proof}

By the above two claims we obtain that if \APEP\ has a pseudodeterministic approximation algorithm, $\PrBPP$ has a solution in $\BPP$. 

\smallskip
\noindent{($\Rightarrow$):} Now suppose that $\PrBPP$ has a solution in $\BPP$. By Proposition~\ref{2pseudo}, there is a  2-pseudodeterministic $(\varepsilon, \delta)$ approximation algorithm  $M$ for \APEP\ where $\delta = 1/4$ and $\epsilon = 1/200$. We slightly modify $M$ as follows: whenever $M$ outputs a value $v$, then output a value $v'$ that is the closest integer multiple of $\varepsilon$ to $v$. Note that the modified machine $M$ is a $(2\varepsilon, \delta)$ approximation algorithm for \APEP. The machine $M$ has the property that every output is of the form $k\varepsilon$, $0 \leq k \leq 1/\varepsilon$. 

For a Boolean circuit $C$, let $p_C$ denote the acceptance probability of  $C$. Thus for every $C$, we have
\begin{equation}\label{eqn1}
\Pr[M(C) \in (p_C - 2\varepsilon, p_C + 2\varepsilon)] \geq 3/4 
\end{equation}

We associate a promise problem $\Pi= (\Pi_y, \Pi_n)$ with $M$. This definition of promise problem is inspired by the work of Goldreich~\cite{Goldreich11}.  

\[\Pi_y = \{\langle C, v\rangle ~|~\mbox{$M(C)$ outputs $v$ with probability at least $3/8$}\}\]
\[\Pi_n =\{\langle C, v\rangle ~|~\mbox{$M(C)$ outputs $v$ with probability at most $1/4$}\}\]

We make the following two critical observations. 
\begin{observation}\label{obs:trivial}
If $\langle C, v\rangle \notin \Pi_n$, then $v \in (p_C - 2\varepsilon, p_C+2\varepsilon)$.
\end{observation}

This observation follows from equation~\ref{eqn1}. 

\begin{observation}\label{obs:trivial2}
For every Boolean circuit $C$, there exists a $v$ such that $\langle C, v\rangle \in \Pi_y$ and $v = k \epsilon$ for some $k > 0$, 
\end{observation}

\begin{proof}
Since $M$ is 2-pseudodeterministic, there is a set $S$ of size at most $2$ such that every element in $S$ lies between $p_C-2\epsilon$ and $p_C+2\epsilon$ and $\Pr[M(C) \in S] \geq 3/4$. Thus there must exist an element $v$ from $S$ such that $M(C)$ outputs $v$ with probability at least $3/8$. Finally note that the modification of $M$ described earlier ensures that $M$ always outputs a multiple of $\epsilon$.
\end{proof}

\begin{claim}
$\Pi \in \PrBPP.$
\end{claim}
\begin{proof}
Consider the algorithm $M_{\Pi}$: On input $\langle C, v\rangle$ run $M(C)$. If it outputs $v$, then accept, else reject. This algorithm accepts all instances from $\Pi_y$ with probability at least $3/8$ and accepts  all instances from $\Pi_n$ with probability at most $1/4$. Since there is a gap between $3/8$ and $1/4$, this gap can be amplified with standard amplification techniques. This implies that $\Pi$ is in $\PrBPP$.
\end{proof}

Now we will complete the proof by designing a pseudodeterministic algorithm for \APEP. By our assumption there is a language $L_{\Pi} \in \BPP$\  that is a solution to $\Pi$.  Consider the following deterministic algorithm for \APEP\ with oracle access to $L_{\Pi}$. On input $C$, check if $\langle C, k\varepsilon\rangle \in L_{\Pi}$ for integer values of $k$, $0 \leq k \leq 1/\varepsilon$. 
Let $\ell$ be the first value such that $\langle C, \ell \varepsilon\rangle \in L_{\Pi}$, then  output $\ell \varepsilon$. By Observation~\ref{obs:trivial2}, such an $\ell$ must exist.  
Moreover, if $\langle C, \ell \varepsilon \rangle \in L_{\Pi}$, then it must be the case that $\langle C, \ell \varepsilon \rangle \notin \Pi_n$. By Observation~\ref{obs:trivial}, we have that $\ell \varepsilon \in (p_c + 2\varepsilon, p_c - 2\varepsilon)$. Thus \APEP\ has a $(2\varepsilon, \delta)$, $\PF^{\BPP}$ approximation algorithm.  This  implies that \APEP\ has a $(2\varepsilon, \delta)$ pseudodeterministic algorithm by Theorem~\ref{thm:GGR}.

\end{proof}

We obtain the following corollary by using the completeness result of \APEP\ . 

\begin{corollary}
If $\PrBPP$ has a solution in $\BPP$, then $\SearchBPP$ admits pseudodeterministic algorithms.
\end{corollary}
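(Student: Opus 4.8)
The plan is to obtain this corollary as a two-step composition: first pass from the hypothesis to a pseudodeterministic algorithm for \APEP\ via Theorem~\ref{thm:prbpp-bpp}, then invoke the completeness of \APEP\ recorded in Theorem~\ref{thm:complete}. Concretely, the hypothesis that $\PrBPP$ has a solution in $\BPP$ is exactly the left-hand side of the biconditional in Theorem~\ref{thm:prbpp-bpp}, so the ($\Rightarrow$) direction of that theorem already yields that \APEP\ admits a pseudodeterministic approximation algorithm. By the convention fixed in the Remark following Proposition~\ref{prop:complete}, this is the same statement as ``$\APEP_{(1/100,1/8)}$ admits a pseudodeterministic algorithm'', which is precisely the hypothesis of Theorem~\ref{thm:complete}.

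From there I would simply apply part (2) of Theorem~\ref{thm:complete}: once $\APEP_{(1/100,1/8)}$ has a pseudodeterministic algorithm, every problem in $\SearchBPP$ has a pseudodeterministic algorithm. Chaining the two implications gives the corollary. If one prefers to keep the error parameters explicit rather than leaning on the Remark's convention, the extra bookkeeping is routine: the ($\Rightarrow$) direction of Theorem~\ref{thm:prbpp-bpp} produces a pseudodeterministic $(2\varepsilon,\delta)$-approximation algorithm for \APEP\ with $2\varepsilon=1/100$, and the output probability of a pseudodeterministic algorithm can be amplified by majority vote over independent runs without changing the canonical value, so $\delta$ can be pushed below any desired constant (in particular $1/8$); alternatively Proposition~\ref{prop:complete} lets one move freely between the various $(\varepsilon,\delta)$ settings.

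There is no genuine obstacle here — the corollary is a direct composition of a result proved earlier in this section with one imported from~\cite{DPV21}. The only point that needs care is making sure that ``pseudodeterministic algorithm for \APEP'' and ``pseudodeterministic $(\varepsilon,\delta)$-approximation algorithm for \APEP'' refer to the same object up to the parameter amplification just described, so that the conclusion of Theorem~\ref{thm:prbpp-bpp} can legitimately be substituted into the hypothesis of Theorem~\ref{thm:complete} without loss.
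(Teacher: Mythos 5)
Your proposal is correct and matches the paper's own two-line argument: apply the ($\Rightarrow$) direction of Theorem~\ref{thm:prbpp-bpp} to get a pseudodeterministic algorithm for \APEP, then invoke part (2) of Theorem~\ref{thm:complete}. The extra care you take with the $(\varepsilon,\delta)$ parameters is reasonable but not needed beyond what the Remark and Proposition~\ref{prop:complete} already provide.
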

\begin{proof}
From the above theorem, if $\PrBPP$ has a solution in $\BPP$, then \APEP\ has pseudodeterministic algorithms.
The proof follows from Theorem~\ref{thm:complete}.
\end{proof}

\subsection{Circuit Lower Bounds}
\begin{theorem}
If \APEP\ admits pseudodeterministic approximation algorithms, then 
\begin{enumerate}
 \item Every promise problems $\Pi = (\Pi_Y, \Pi_N)$ in \PromiseMA\ has a solution in \MA. \label{two}
     \item $\MA = \exists \cdot \BPP$. \label{three}
   \item $\MA$ does not have fixed polynomial-size circuits.
   \end{enumerate}
\end{theorem}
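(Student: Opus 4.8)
The plan is to derive all three items from Theorem~\ref{thm:prbpp-bpp}, which under the hypothesis says that every promise problem in $\PrBPP$ has a solution in $\BPP$, together with one ``inner promise problem'' construction, and, for item~3, Santhanam's circuit lower bound for $\PromiseMA$~\cite{Santhanam09}.

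\emph{Items 1 and 2.} Given $\Pi=(\Pi_Y,\Pi_N)\in\PromiseMA$, fix its probabilistic verifier $V$ and witness-length polynomial $p$, and define the inner promise problem $\Gamma=(\Gamma_Y,\Gamma_N)$ over pairs with $|w|=p(|x|)$: put $\langle x,w\rangle\in\Gamma_Y$ if $\Pr[V(x,w)\text{ accepts}]\ge 2/3$ and $\langle x,w\rangle\in\Gamma_N$ if $\Pr[V(x,w)\text{ accepts}]<1/3$. Then $\Gamma\in\PrBPP$ (witnessed by $V$ itself), so by Theorem~\ref{thm:prbpp-bpp} and the hypothesis $\Gamma$ has a solution $S\in\BPP$. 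Set $L=\{x\mid \exists w\in\{0,1\}^{p(|x|)},\ \langle x,w\rangle\in S\}$; guessing $w$ and running the $\BPP$ machine for $S$ shows $L\in\exists\cdot\BPP\subseteq\MA$. One checks $L$ solves $\Pi$: if $x\in\Pi_Y$ there is $w$ with $\langle x,w\rangle\in\Gamma_Y\subseteq S$, so $x\in L$; if $x\in\Pi_N$ then for every $w$, $\langle x,w\rangle\in\Gamma_N$, so $\langle x,w\rangle\notin S$ (since $S\cap\Gamma_N=\emptyset$), so $x\notin L$. This gives item~1. For item~2, $\exists\cdot\BPP\subseteq\MA$ is the trivial ``Merlin sends $w$, Arthur runs the $\BPP$ test'' protocol, and the converse is the construction above applied to the total promise problem $(L,\overline L)$ for an arbitrary $L\in\MA$: the resulting solution $L'$ satisfies $L\subseteq L'$ and $L'\cap\overline L=\emptyset$, i.e.\ $L'=L$, while $L'\in\exists\cdot\BPP$ by construction.

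\emph{Item 3.} Fix a constant $k$. By Santhanam~\cite{Santhanam09} there is a promise problem $\Pi_k\in\PromiseMA$ not decided on its promise by any circuit family of size $O(n^k)$. By item~1, $\Pi_k$ has a solution $L_k\in\MA$. If $L_k$ had $O(n^k)$-size circuits, those circuits would output $1$ on $\Pi_{k,Y}\subseteq L_k$ and $0$ on $\Pi_{k,N}$ (disjoint from $L_k$), i.e.\ they would decide $\Pi_k$ on its promise, a contradiction. Hence for every $k$ there is a language in $\MA$ with no $O(n^k)$-size circuits; equivalently, by item~2, $\exists\cdot\BPP$ has no fixed polynomial-size circuits.

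\emph{Where the work is.} Once Theorem~\ref{thm:prbpp-bpp} is in hand the argument is short; the one delicate point is the passage $\exists\cdot\PrBPP\rightsquigarrow\exists\cdot\BPP\subseteq\MA$ — namely that the object $S$ supplied by ``$\PrBPP=\BPP$'' is a genuine $\BPP$ \emph{language} (bounded error on \emph{all} inputs, not merely on the promise), so that prefixing an existential quantifier over witnesses keeps us inside $\MA$. This is precisely where the hypothesis is used, and is the reason $\MA=\exists\cdot\BPP$ is not known unconditionally; the only other ingredient, reducing Santhanam's promise-problem lower bound to a language lower bound via item~1, is routine.
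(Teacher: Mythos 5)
Your proof is correct and follows essentially the same route as the paper's: you form the inner promise problem over $\langle x,w\rangle$ pairs, invoke Theorem~\ref{thm:prbpp-bpp} to get a $\BPP$ solution, prepend the existential quantifier to land in $\exists\cdot\BPP\subseteq\MA$, and then import Santhanam's $\PromiseMA$ lower bound. Your write-up is if anything slightly more explicit than the paper's (e.g.\ checking that the solution to the total promise problem $(L,\overline{L})$ equals $L$, and spelling out why $O(n^k)$-size circuits for $L_k$ would contradict Santhanam), but there is no substantive difference in approach.
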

\begin{proof}
\begin{enumerate}

\item We first show that if $\Pi$ is a promise problem in $\PromiseMA$, then $\Pi \in \exists \cdot \PrBPP$. Let $M$ be a probabilistic polynomial-time  verifier.  Consider the following promise problem $\Pi'$: A tuple $\langle x, w\rangle$ is a positive instance if  $M$ accepts $\langle x, w\rangle$ with probability at least $2/3$ and is a negative instance if $M$ accepts $\langle x, w\rangle$ with probability at most $1/3$. It is easy to see that $\Pi  = \exists. \Pi'$.  By Theorem~\ref{thm:prbpp-bpp}, $\Pi'$ has a solution $L'$ in $\BPP$ if \APEP\ admits pseudodeterministic algorithms.  Note that the language $L = \exists \cdot L'$ is a solution to $\Pi$, and $\exists \cdot L'$ is in $\exists \cdot \BPP$.  Since $\exists \cdot \BPP$ is a subset of $\MA$, the claim follows. 

\item The above proof showed that every promise problem in $\PromiseMA$ has a solution in $\exists\cdot \BPP$. Thus it follows that $\MA = \exists \cdot \BPP$.

\item Santhanam~\cite{Santhanam09} showed that for every $k$, there is a problem $\Pi_k$ in $\PromiseMA$ that does not have any solution that admits $O(n^k)$ size circuits. Since by Item~\ref{two} $\Pi_k$ has a solution $L_k \in \MA$, we get that $L_k$ does not have $O(n^k)$ size circuits.  Combining this with~\ref{three}, it follows that $\exists \cdot \BPP$ does not have $O(n^k)$ size circuits.
\end{enumerate}
\end{proof}

The above result reveals an interesting connection between pseudodeterminism, derandomization of $\BPP$, and circuit complexity. If \APEP\ has pseudodeterministic algorithms, then derandomizing $\BPP$ to $\P$ implies that $\NP$ does not have fixed polynomial-size circuits.

\subsection{Hierarchy Theorems}
\begin{theorem}
If \APEP\ admits pseudodeterministic approximation algorithms, then hierarchy theorems for \BPTIME\ hold. In particular, ${\BPTIME(n^\alpha) \subsetneq \BPTIME(n^{\beta})}$ for constant $1\leq \alpha < \beta$.  
\end{theorem}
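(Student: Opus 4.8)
The plan is to route the hypothesis through Theorem~\ref{thm:prbpp-bpp}: since \APEP\ having a pseudodeterministic algorithm is equivalent to $\PrBPP$ having solutions in $\BPP$, I get a ``bounded-error universal simulator'' that lives in $\BPP$, and then I can diagonalize. The classical obstruction to probabilistic hierarchy theorems is that an arbitrary probabilistic machine need not have bounded error, so there is no definite Boolean value to diagonalize against; here that obstruction evaporates, because the $\BPP$ solution is guaranteed to be correct on exactly the instances that arise when diagonalizing against a genuine $\BPTIME$ machine. This is also what lets us drop the ``advice bit'' that all the prior hierarchy results referenced above needed.

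Concretely, first fix an enumeration $M_1,M_2,\dots$ of probabilistic Turing machines and define the promise problem $\Pi_U$ on instances $\langle i,x,1^t\rangle$: a yes-instance if the $t$-step simulation of $M_i(x)$ accepts with probability at least $2/3$, and a no-instance if it accepts with probability at most $1/3$. The $t$-step simulator is a probabilistic polynomial-time machine (its running time is polynomial in $|\langle i,x,1^t\rangle|$ because $t$ is given in unary), so $\Pi_U\in\PrBPP$; by Theorem~\ref{thm:prbpp-bpp} and the hypothesis there is a solution $L_U\in\BPP$, and I fix a constant $d$ with $L_U\in\BPTIME(n^d)$. The property I will use is: if $M_i(x)$ halts within $t$ steps and is bounded-error there, then $\langle i,x,1^t\rangle$ is a yes- or no-instance of $\Pi_U$ and so $L_U$ reports correctly whether $M_i(x)$ accepts.

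Next I would show that for every constant $a\ge 1$ there is a constant $b=b(a,d)$ with $\BPTIME(n^a)\subsetneq\BPTIME(n^{b})$. Let $N_a$ be the machine that, on input $y=\langle j,1^k\rangle$ with $n=|y|$, runs $L_U$ (amplified to error $2^{-n}$) on $\langle j,y,1^{\lceil n^{2a}\rceil}\rangle$ and outputs the complement of the answer, and that rejects all malformed inputs. Then $N_a$ is bounded-error on every input, so $L(N_a)$ is well defined and is in $\BPTIME(n^{b})$ for $b=O(ad)$, since the query has length $O(n^{2a})$ and $L_U$ costs $O(n^{2ad})$ on it. Suppose $L(N_a)\in\BPTIME(n^a)$, decided with bounded error by some $M_j$ of running time $O(n^a)$. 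For all sufficiently large $k$, on $y=\langle j,1^k\rangle$ the machine $M_j$ halts within $\lceil n^{2a}\rceil$ steps, so the instance $\langle j,y,1^{\lceil n^{2a}\rceil}\rangle$ is a yes- or no-instance of $\Pi_U$ and $L_U$ answers it correctly; hence $N_a$ accepts $y$ iff $M_j$ rejects $y$, i.e. iff $y\notin L(N_a)$, a contradiction. Finally, a standard padding argument shows that $\BPTIME(n^\alpha)=\BPTIME(n^\beta)$ for some $1\le\alpha<\beta$ would force $\BPTIME(n^c)\subseteq\BPTIME(n^\alpha)$ for every constant $c$ (pad inputs of length $n$ to length $\lceil n^{\beta/\alpha}\rceil$, translate a language in $\BPTIME(n^{\beta^2/\alpha})$ down to one in $\BPTIME(m^{\beta})=\BPTIME(m^\alpha)$, unpad to put the original in $\BPTIME(n^\beta)=\BPTIME(n^\alpha)$, and iterate); taking $c=b(\alpha,d)>\alpha$ contradicts the separation just proved. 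Therefore $\BPTIME(n^\alpha)\subsetneq\BPTIME(n^\beta)$ for all constants $1\le\alpha<\beta$.

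I expect the main obstacle to be the routine-but-delicate bookkeeping that keeps $N_a$ a genuine bounded-error machine of a fixed polynomial running time: amplifying $L_U$, choosing the step budget $\lceil n^{2a}\rceil$ large enough to dominate the $O(n^a)$ running time of every machine being diagonalized against, and the time-constructibility details of the padding step. Conceptually, though, there is essentially one idea — that $\PrBPP=\BPP$ is precisely the hypothesis under which the advice bit in the usual probabilistic hierarchy arguments becomes unnecessary — and once $L_U$ is in hand the rest is the classical diagonalization plus translation.
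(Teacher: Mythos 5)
Your proposal is correct and takes essentially the same route as the paper's proof: both turn arbitrary probabilistic machines into machines with genuine $\BPP$-type behaviour (the paper by running $\algoapep$ directly on the circuit encoding each machine's computation, you by invoking Theorem~\ref{thm:prbpp-bpp} on a universal simulation promise problem), then diagonalize against them and finish by padding. The difference is only in packaging, since the paper's per-machine construction $M'_i$ is exactly what the proof of Theorem~\ref{thm:prbpp-bpp} produces when applied to your $\Pi_U$.
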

\begin{proof}

We will first show that there is a constant $c$ so that $\BPTIME(n) \subsetneq \BPTIME(n^c)$.  A similar arguments will show that $\BPTIME(n^a) \subsetneq \BPTIME(n^{ca})$ for every $a >0$. Then the theorem will follow from padding arguments. 
 
Let $\{M_i\}_{i \geq 1}$ be an enumeration of probabilistic linear-time Turing machines. Suppose that $\algoapep$ runs in time $m^a$ in circuits of size $m$ for some $a > 0$.
For every $M_i$, consider a probabilistic machine $M'_i$ defined as follows. 
$M'_i$ on input $x$ constructs a circuit $C_{i,x}$ as follows. 
The circuit $C_{i,x}$ on input $r$ simulates $M_i(x)$ with $r$ as random bits and accepts if and only if $M_i$ accepts. Now $M'_i$ runs $\algoapep(C_{i,x})$,
and accepts if  and only if the output of $\algoapep(C_{i,x}) \geq 1/2$.

\begin{claim}
There exists a constant $c > 0$ such that for every $i$, the machine $M'_i$ runs in time $O(n^c)$. 
\end{claim}

\begin{proof}
Let $n$ be the length of input $x$ to $M'_i$. The machine $M'_i(x)$ first constructs the circuit $C_{i,x}$. Since $M_i(x)$ runs in $O(n)$ time, the size of the circuit $C_{i,x}$ is bounded by $O(n^2)$, it can be constructed in $O(n^2)$ time. Next $M'_i$ runs $\algoapep$ on $C_{i,x}$, this steps takes $O(n^{2a})$ time. Since $a$ is a constant, there is a universal constant $c$ such that the runtime of $M'_i$ is $O(n^c)$.

\end{proof}

\begin{claim}
For every $i$, $M'_i$ has $\BPP$-type behaviour
\end{claim}

\begin{proof}
This follows because the pseudodeterministic algorithm $\algoapep$, on every input, outputs a canonical value $v$ with probability at least $2/3$. If the canonical value $v \geq 1/2$, then $M_i'$ accepts with probability at least $2/3$, else $M_i'$ accepts with probability at most $1/3$.  Thus $M_i$ has $\BPP$-type behaviour.
\end{proof}

\begin{claim}\label{index-claim}
For every $L \in \BPTIME(n)$, there is $i > 0$ such that $M'_i$ accepts $L$.
\end{claim}

\begin{proof}
Since $L \in \BPTIME(n)$, there exists an $i > 0$ such that $M_i$ accepts $L$ and $M_i$ has $\BPP$-type behaviour. Let $x \in L $ be an input to $M_i$. The probability that $M_i$ accepts is $\geq 2/3$. Thus the acceptance probability of the circuit $C_{i,x}$ is at least $2/3$. Thus $\algoapep(C_{i,x})$ outputs a canonical $v \geq 2/3-1/100 \geq 1/2$ with probability at least $2/3$. Thus $M'_i$ accepts $x$ with probability $\geq 2/3$.
Similar arguments show that if $x \not\in L$, $M'_i$ rejects $x$ with probability $\geq 2/3$.

\end{proof}

Now using the standard diagonalization argument, we construct a language  $L_D$ in $\BPTIME(n^{c+1})$. The language $L_D$ is a tally language and we describe it via a $\BPTIME(n^{c+1})$ machine $N$ that accepts it. The machine $N$ on input $0^i$ simulates $M'_i(0^i)$ and accepts if and only if $M'_i(0^i)$ rejects. 
Since $M'_i$ has $\BPP$-type behaviour, $N$ also has $\BPP$-type behaviour. Since $M'_i$ runs in time $O(n^c)$, $N$ can simulate it in time $O(n^{c+1})$. Thus $L_D \in \BPTIME (n^{c+1})$. Suppose that $L_D\in \BPTIME (n)$. By Claim~\ref{index-claim}, there exists $i > 0$ such that $M'_i$ accepts $L_D$. Now consider input $0^i$. Observe that $M'_i$ accepts $0^i$ if and only if $N$ rejects $0^i$. Thus $0^i \in L_D$ if and only if $M'_i$ rejects $0^i$. This is a contradiction, and thus $L_D \notin \BPTIME(n)$. 
\end{proof}

\subsection{Multivalued Functions}

\begin{theorem}
If \APEP\ admits pseudodeterministic approximation algorithms, then
 every multivalued function $f$ that admits a $k(n)$-pseudodeterministic algorithm for a polynomial $k(n)$ is in {\SearchBPP}.
\end{theorem}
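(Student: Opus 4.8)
The plan is to reduce to the promise-class characterization supplied by the hypothesis. By Theorem~\ref{thm:prbpp-bpp}, assuming \APEP\ has a pseudodeterministic algorithm is equivalent to \PrBPP\ having a solution in \BPP, and this is the only way I will use the hypothesis. Let $A$ be the given $k(n)$-pseudodeterministic algorithm for $f$: for every $x$ with $|x|=n$ there is a set $S_x \subseteq f(x)$ with $|S_x| \leq k(n)$ and $\Pr[A(x) \in S_x] \geq \frac{k(n)+1}{k(n)+2}$. I will use two elementary observations. First, averaging over $S_x$, some $v^* \in S_x$ is output with probability at least $\frac{k(n)+1}{k(n)(k(n)+2)}$. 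Second, since $S_x \subseteq f(x)$ and $\Pr[A(x) \notin S_x] \leq \frac{1}{k(n)+2}$, every $v \notin f(x)$ is output with probability at most $\frac{1}{k(n)+2}$. Because $\frac{k+1}{k(k+2)} > \frac{1}{k+2}$, these two quantities are separated by a gap of size $\Theta(1/k(n)^2)$, so I fix thresholds $\tau_2(n) = \frac{1}{k(n)+2} + \frac{1}{2k(n)(k(n)+2)}$ and $\tau_1(n) = \frac{k(n)+1}{k(n)(k(n)+2)}$, which satisfy $\tau_1 > \tau_2$, $\tau_1-\tau_2 = \Theta(1/k(n)^2)$, a value of $S_x$ reaching probability $\geq \tau_1(n)$, and a value outside $f(x)$ staying strictly below $\tau_2(n)$.

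Next I would define the promise problem $\Pi = (\Pi_y,\Pi_n)$ by $\langle x,v\rangle \in \Pi_y$ iff $\Pr[A(x)=v] \geq \tau_1(|x|)$ and $\langle x,v\rangle \in \Pi_n$ iff $\Pr[A(x)=v] < \tau_2(|x|)$. Since $k(n)$ is a polynomial and $A$ runs in polynomial time, $\Pr[A(x)=v]$ can be estimated to additive error $(\tau_1-\tau_2)/3$ with constant confidence by $\poly(n)$ independent runs of $A(x)$, each compared against $v$; standard amplification of the resulting gap shows $\Pi \in \PrBPP$. By the hypothesis, $\Pi$ has a solution $L \in \BPP$. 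Put $R = \{(x,v) : \langle x,v\rangle \in L\}$, so $R \in \BPP$, which gives the second \SearchBPP\ condition. For the witness-set requirements: the heavy value $v^* \in S_x$ satisfies $\langle x,v^*\rangle \in \Pi_y \subseteq L$, so $W_x \ne \emptyset$; and if $v \in W_x$ then $\langle x,v\rangle \in L$ is not in $\Pi_n$, hence $\Pr[A(x)=v] \geq \tau_2(|x|) > \frac{1}{k(|x|)+2}$, which by the second observation forces $v \in f(x)$, so $W_x \subseteq f(x)$.

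For the first \SearchBPP\ condition I would use the following search algorithm: on input $x$, run $A(x)$ independently $m = \Theta(k(|x|)\,|x|)$ times to collect candidates $v_1,\dots,v_m$, run an amplified \BPP-decider for $L$ on each $\langle x,v_i\rangle$ with error below $1/(3m)$, and output the first accepted $v_i$ (a fixed default if none is accepted). Since $\Pr[A(x)=v^*] \geq \tau_1(|x|) = \Omega(1/k(|x|))$, the value $v^*$ occurs among $v_1,\dots,v_m$ except with probability $(1-\Omega(1/k))^m \le 2^{-\Omega(|x|)}$; on that event, and when all $m$ decider runs are correct, $v^*$ is accepted, so the output is some accepted $v_i \in W_x$. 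Hence the algorithm outputs an element of $W_x$ with probability $\ge 2/3$, completing $R \in \SearchBPP$, and since $W_x \ne \emptyset$ and $W_x \subseteq f(x)$ for all $x$ we conclude $f \in \SearchBPP$.

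The step I expect to be the main obstacle is precisely the one the hypothesis is there to handle: a plain heavy-hitter/threshold argument cannot decide candidate values whose output probability lies near the cutoff, and once $k(n)$ is super-constant there may be many such borderline values, so one cannot enumerate or exactly verify $S_x$ — this is why \cite{DPV21} only obtained the constant-$k$ case. Packaging the borderline cases as the \PrBPP\ promise problem $\Pi$ and invoking $\PrBPP = \BPP$ is exactly what dissolves this difficulty; everything else (Chernoff estimation, repeated sampling to hit $v^*$, a union bound over $\poly(n)$ verifications) is routine and only needs $k(n)$ to be polynomially bounded.
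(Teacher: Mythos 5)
Your proof is correct, and it reaches the result by a slightly different route than the paper. The paper works with the hypothesized algorithm \algoapep\ directly: it builds the circuit $C_{x,v}(r)=[M_f(x;r)=v]$, fixes a single threshold $\tau$ in the $\Theta(1/k(n)^2)$-wide gap between the guaranteed heavy value ($\Pr \geq \frac{k+1}{k(k+2)}$) and any value outside $S_x$ ($\Pr \leq \frac{1}{k+2}$), and defines $R$ by whether the \emph{canonical} output of a sub-constant-accuracy \algoapep\ (obtained via Proposition~\ref{prop:complete}) clears $\tau$; the canonicity is what makes $R$ a well-defined relation in \BPP. You instead package exactly the same two-threshold gap as a promise problem $\Pi$ on pairs $\langle x,v\rangle$, observe $\Pi\in\PrBPP$ by direct sampling of $A$, and invoke Theorem~\ref{thm:prbpp-bpp} to get a \BPP\ solution $L$ that you use as $R$. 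Unwinding the forward direction of Theorem~\ref{thm:prbpp-bpp} recovers essentially the paper's relation, so the mathematics is the same; what your version buys is modularity (the only consequence of the hypothesis you need is that \PrBPP\ has solutions in \BPP, and the sub-constant-$\varepsilon$ issue is absorbed into the sampling step), while the paper's direct version produces $R$ explicitly in terms of canonical \APEP\ values, which it then reuses for the follow-up corollary. Your witness-finding algorithm (sample $A(x)$ about $k(n)\cdot n$ times and filter through an amplified decider for $L$) matches the paper's $B'$-plus-amplification step; the only nit is that your stated failure bound is $1/3+2^{-\Omega(n)}$, marginally above $1/3$, which is fixed by taking per-query decider error $1/(6m)$ --- a purely cosmetic adjustment of constants.
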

\begin{proof} 
 Let $f$ be a multi-valued function and let $M_f$
 be a $k(n)$-pseudodeterministic algorithm for $f$.
Without loss of generality we can assume that $f$ maps strings of length $n$ to strings of length $p(n)$ for some polynomial $p$.
 For input $x$ of length $n$, let $S_x$ be the set of size $\leq k(n)$ such that $S_x \subseteq f(x)$ and $M_f(x) \in S_x$ with probability $\geq \frac{k(n)+1}{k(n)+2}$. From the definition of $k(n)$-pseudodeterminism, we have the following claim. 

\begin{claim}\label{clm:vstar}
$\exists v* \in S_x$ such that $\Pr \left [M_f(x) = v* \right ] \geq \frac{1+1/k(n)}{k(n)+2}$. Moreover,  $\forall v \not\in S_x$  $\Pr[M_f(x)=v] < \frac{1}{k(n)+2}$. 
\end{claim}

Let $\tau = \frac{1+ 1/2k(n)}{k(n)+2}$ be a threshold that is the middle point of $\frac{1+1/k(n)}{k(n)+2}$ and $\frac{1}{k(n)+2}$. 
For a pair of strings $\langle x,v\rangle$, where $|x| = n$ and $|v| = p(n)$,  let $C_{x,v}$ be the following Boolean circuit. $C_{x,v}$ on input $r$, outputs 1 if $M_f(x)$ on random string $r$ outputs $v$, 0 otherwise.  We will show that  there is a relation $R$ so that  (1) $\forall x: W_x \neq \phi $ and $W_x \subseteq f(x)$, and (2) $R\in  \SearchBPP$. We define the relation $R$ as follows. 

\[
R= \{\langle x, v \rangle \mid \mbox{the canonical output of } \algoapep(C_{x,v}) \geq \tau \} 
\]

Here $\algoapep$ is the $(\varepsilon, \delta)$ pseudodeterministic algorithm for \APEP, where $\varepsilon = {1/2k(n)(k(n)+2)}$ and $\delta = 2^{-n}$. Note that such an algorithm exists under the assumption by Proposition~\ref{prop:complete} and standard error reduction techniques. 

\begin{claim}
$\forall x, W_x \subseteq f(x)$ and $W_x$ is not empty. 
\end{claim}

\begin{proof}
For this we show that $W_x \subseteq S_x$. If $v \not\in S_x$, then $M_f(x)$ outputs $v$ with probability at most $1/(k(n)+2)$, thus the canonical output of $A(C_{x,v})$ is $<\frac{1}{k(n)+2}+ \varepsilon = \tau$ and by definition $v\not\in W_x$. On the other hand, Since $v^* \in W_x$, the canonical output of $\algoapep(C_{x, v^*})$ is $\geq \frac{1+1/k(n)}{k(n)+2} - \varepsilon = \tau$. Thus $v^* \in W_x$. Thus $W_x \neq \phi$
\end{proof}

\begin{claim}
$R \in \BPP$. 
\end{claim}
\begin{proof}
Consider the algorithm that on input $\langle x,v \rangle$, runs $\algoapep(C_{x,v})$ and accepts if and only if the output of $\algoapep$ is $\geq \tau$. 
Since $\algoapep$ is a pseudodeterministic algorithm for \APEP\, it outputs a canonical value with probability at least $1- 1/2^n$. This shows that $R$ is in $\BPP$.
\end{proof}

\begin{claim}
There is a probabilistic algorithm $B$
 that on input $x$ outputs $v\in W_x$ with probability $> 2/3$. 
\end{claim}

\begin{proof}
We first design an algorithm $B'$ with a nontrivial success probability and boost it to get algorithm $B$.

\noindent{\bf Algorithm $B'$:} On input $x$, run $M_f(x)$. Let $v$ be an output. Construct circuit $C_{x,v}$ and run $\algoapep$ on $C_{x,v}$. If the output of $\algoapep$ is $\geq \tau$, output $v$. Otherwise output $\bot$.   
\smallskip

Consider a  $v \notin W_x$. Then by definition of $R$, we have that the canonical output of $\algoapep(C_{x, v})$ is less than $\tau$. Thus $\algoapep(C_{x, v})$ outputs a value larger than $\tau$ with probability at most $1/2^n$.   Thus we have that for every $v \notin W_x$
\[
\Pr[B' \mbox { outputs $v$} | M_f(x) \mbox{ outputs $v$}] \leq 1/2^n
\]

\begin{eqnarray*}
\Pr[B' \mbox{ outputs a  $v \notin W_x$}] & =
& \sum_{v \notin W_x} \Pr[B' \mbox{ outputs $v$} | M_f(x) \mbox{ outputs  $v$}] \times \Pr[M_f(x) \mbox{outputs $v$} ] \\
& \leq & 1/2^n \sum_{v} \Pr[M_f(x) \mbox{outputs $v$} ]\\
& \leq & 1/2^n
\end{eqnarray*}


By Claim~\ref{clm:vstar},  probability that $M_f(x)$ outputs $v^*$ is at least $\frac{1+1/k(n)}{k(n)+2}$, it must be the case that the canonical output of $A(C_{x, v^*})$ is at least $\frac{1+1/k(n)}{k(n) + 2} - \varepsilon = \tau$.  Thus $v^* \in W_x$.  Thus $\algoapep(C_{x,v^*})$ outputs a value $\geq \tau$ with probability at least $1 - 1/2^n$.
 Thus the probability that $B'$ outputs $v^*$ is at least $\frac{1+1/k(n)}{k(n)+2} \times (1 - 1/2^n)$.
 
 Thus $B'$ outputs a value that is not in  $W_x$ with probability at most $1/2^n$, it outputs a value in $W_x$ with probability at least $\frac{1+1/k(n)}{k(n)+2} \times (1 - 1/2^n)$, and  outputs $\bot$ with the remaining probability. We obtain $B$ by repeated invocations  ($O(k(n)^3)$ many) of $B'$ and outputting the most frequent output. 

 \end{proof}

This completes the proof that $f$ is in $\SearchBPP$.

\end{proof}
Using the above result, we obtain the following corollary, which improves a result from~\cite{DPV21}.

\begin{theorem}
If \APEP\ admits pseudodeterministic algorithm, then any multivalued function  that admits a $k(n)$-pseudodeterministic algorithm also admits a
pseudodeterministic algorithms, where $k(n)$ is a polynomial.
\end{theorem}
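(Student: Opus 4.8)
The plan is to chain the preceding theorem with the completeness of \APEP\ for \SearchBPP\ (Theorem~\ref{thm:complete}, item~2). Let $f$ be a multivalued function that admits a $k(n)$-pseudodeterministic algorithm for a polynomial $k(n)$. The previous theorem, applied under the hypothesis that \APEP\ admits a pseudodeterministic algorithm, already gives that $f \in \SearchBPP$; unpacking the definition, this means there is a search relation $R \in \SearchBPP$ such that for every $x$ the witness set $W_x$ is nonempty and $W_x \subseteq f(x)$.

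Next I would invoke Theorem~\ref{thm:complete}, item~2: since (by hypothesis) $\APEP_{(1/100,1/8)}$ admits a pseudodeterministic algorithm, every problem in \SearchBPP\ admits a pseudodeterministic algorithm. Applying this to the relation $R$ above yields a probabilistic polynomial-time algorithm $A$ that, for every $x$ with $W_x \neq \phi$, outputs a single canonical witness $v_x \in W_x$ with probability at least $2/3$. Because $W_x \neq \phi$ for \emph{every} $x$ (by the choice of $R$ coming from the previous theorem), $A$ produces such a canonical value $v_x$ on all inputs.

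Finally, I would observe that $A$ is itself a pseudodeterministic algorithm for $f$: for each $x$ the canonical value $v_x$ lies in $W_x \subseteq f(x)$, so $A$ is a probabilistic polynomial-time algorithm that, with probability at least $2/3$, outputs a fixed value belonging to $f(x)$ — which is exactly the definition of a pseudodeterministic algorithm for the multivalued function $f$. This completes the argument.

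I do not anticipate a serious obstacle: the only point requiring care is that ``pseudodeterministic algorithm for a \SearchBPP\ relation $R$'' and ``pseudodeterministic algorithm for the multivalued function $f$'' line up, and they do precisely because $W_x \subseteq f(x)$ and $W_x \neq \phi$ for all $x$. The substantive work — converting the $k(n)$-pseudodeterministic algorithm into a \SearchBPP\ relation — was already carried out in the preceding theorem, and the \SearchBPP\ completeness of \APEP\ is quoted from \cite{DPV21} via Theorem~\ref{thm:complete}.
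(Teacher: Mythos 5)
Your proposal is correct and follows exactly the same route as the paper: apply the preceding theorem to place $f$ in \SearchBPP\ and then invoke Theorem~\ref{thm:complete} (item 2) to extract a pseudodeterministic algorithm. The extra care you take in checking that $W_x \subseteq f(x)$ and $W_x \neq \phi$ make the \SearchBPP\ pseudodeterministic algorithm a valid pseudodeterministic algorithm for $f$ is a detail the paper leaves implicit, but it is the same argument.
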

\begin{proof}
By the above theorem, if \APEP\ admits pseudodeterministic algorithm, then any problem that admits a $k(n)$-pseudodeterministic algorithm is in $\SearchBPP$. By Theorem~\ref{thm:complete},  if \APEP\  admits pseudodeterministic algorithms, every problem 
in $\SearchBPP$ has a pseudodeterministic algorithm.
\end{proof}

~~\\
\noindent{\bf Acknowledgements.}  We thank Zhenjian Lu, Igor Oliveira, and Rahul Santhanam for sharing a draft of their work.

\bibliographystyle{alpha}

\begin{thebibliography}{GGMW20}

\bibitem[AV20]{AV20}
Nima Anari and Vijay~V. Vazirani.
\newblock Matching is as easy as the decision problem, in the {NC} model.
\newblock In {\em 11th Innovations in Theoretical Computer Science Conference,
  {ITCS} 2020, January 12-14, 2020, Seattle, Washington, {USA}}, volume 151 of
  {\em LIPIcs}, pages 54:1--54:25. Schloss Dagstuhl - Leibniz-Zentrum f{\"{u}}r
  Informatik, 2020.

\bibitem[Bar02]{Barak02}
Boaz Barak.
\newblock A probabilistic-time hierarchy theorem for "slightly non-uniform"
  algorithms.
\newblock In {\em Randomization and Approximation Techniques, 6th International
  Workshop, {RANDOM} 2002, Cambridge, MA, USA, September 13-15, 2002,
  Proceedings}, volume 2483 of {\em Lecture Notes in Computer Science}, pages
  194--208. Springer, 2002.

\bibitem[BCG{\etalchar{+}}96]{BCGKT96}
N.~H. Bshouty, R.~Cleve, R.~Gavald{\`{a}}, S.~Kannan, and C.~Tamon.
\newblock Oracles and queries that are sufficient for exact learning.
\newblock {\em J. Comput. Syst. Sci.}, 52(3):421--433, 1996.

\bibitem[Cai01]{Cai01}
J-Y. Cai.
\newblock $s^{P}_2 \subseteq zpp^{NP}$.
\newblock In {\em 42nd Annual Symposium on Foundations of Computer Science,
  {FOCS} 2001, 14-17 October 2001, Las Vegas, Nevada, {USA}}, pages 620--629,
  2001.

\bibitem[Coo73]{Cook73}
Stephen~A. Cook.
\newblock A hierarchy for nondeterministic time complexity.
\newblock {\em J. Comput. Syst. Sci.}, 7(4):343--353, 1973.

\bibitem[DPV18]{DixonPavanVinod18}
Peter Dixon, A.~Pavan, and N.~V. Vinodchandran.
\newblock On pseudodeterministic approximation algorithms.
\newblock In {\em 43rd International Symposium on Mathematical Foundations of
  Computer Science, {MFCS} 2018, August 27-31, 2018, Liverpool, {UK}}, volume
  117 of {\em LIPIcs}, pages 61:1--61:11, 2018.

\bibitem[DPV21]{DPV21}
Peter Dixon, A.~Pavan, and N.~V. Vinodchandran.
\newblock Complete problems for multi-pseudodeterministic computations.
\newblock In {\em 12th Innovations in Theoretical Computer Science Conference,
  {ITCS} 2021, January 6-8, 2021, Virtual Conference}, volume 185 of {\em
  LIPIcs}, pages 66:1--66:16. Schloss Dagstuhl - Leibniz-Zentrum f{\"{u}}r
  Informatik, 2021.

\bibitem[ESY84]{EvenSelmanYacobi84}
Shimon Even, Alan~L. Selman, and Yacov Yacobi.
\newblock The complexity of promise problems with applications to public-key
  cryptography.
\newblock {\em Inf. Control.}, 61(2):159--173, 1984.

\bibitem[FS04]{FortnowSanthanam04}
Lance Fortnow and Rahul Santhanam.
\newblock Hierarchy theorems for probabilistic polynomial time.
\newblock In {\em 45th Symposium on Foundations of Computer Science {(FOCS}
  2004), 17-19 October 2004, Rome, Italy, Proceedings}, pages 316--324. {IEEE}
  Computer Society, 2004.

\bibitem[FST05]{FortnowSanthanamTrevisan05}
Lance Fortnow, Rahul Santhanam, and Luca Trevisan.
\newblock Hierarchies for semantic classes.
\newblock In Harold~N. Gabow and Ronald Fagin, editors, {\em Proceedings of the
  37th Annual {ACM} Symposium on Theory of Computing, Baltimore, MD, USA, May
  22-24, 2005}, pages 348--355. {ACM}, 2005.

\bibitem[GG11]{GatGoldwasser11}
E.~Gat and S.~Goldwasser.
\newblock Probabilistic search algorithms with unique answers and their
  cryptographic applications.
\newblock {\em Electronic Colloquium on Computational Complexity {(ECCC)}},
  18:136, 2011.

\bibitem[GG17]{GoldwasserGrossman17}
S.~Goldwasser and O.~Grossman.
\newblock Bipartite perfect matching in pseudo-deterministic {NC}.
\newblock In {\em 44th International Colloquium on Automata, Languages, and
  Programming, {ICALP} 2017, July 10-14, 2017, Warsaw, Poland}, pages
  87:1--87:13, 2017.

\bibitem[GGH17]{GoldwasserGrossmanHolden17}
S.~Goldwasser, O.~Grossman, and D.~Holden.
\newblock Pseudo-deterministic proofs.
\newblock {\em CoRR}, abs/1706.04641, 2017.

\bibitem[GGH19]{GoemansGoldwasserHolden20}
Michel Goemans, Shafi Goldwasser, and Dhiraj Holden.
\newblock Doubly-efficient pseudo-deterministic proofs.
\newblock {\em arXiv}, 2019.

\bibitem[GGMW20]{GGMW20}
Shafi Goldwasser, Ofer Grossman, Sidhanth Mohanty, and David~P. Woodruff.
\newblock Pseudo-deterministic streaming.
\newblock In Thomas Vidick, editor, {\em 11th Innovations in Theoretical
  Computer Science Conference, {ITCS}}, volume 151 of {\em LIPIcs}, pages
  79:1--79:25, 2020.

\bibitem[GGR13]{GoldreichGoldwasserRon13}
O.~Goldreich, S.~Goldwasser, and D.~Ron.
\newblock On the possibilities and limitations of pseudodeterministic
  algorithms.
\newblock In {\em Innovations in Theoretical Computer Science, {ITCS} '13,
  Berkeley, CA, USA, January 9-12, 2013}, pages 127--138, 2013.

\bibitem[GL19]{GrossmanLiu19}
Ofer Grossman and Yang~P. Liu.
\newblock Reproducibility and pseudo-determinism in log-space.
\newblock In {\em Proceedings of the Thirtieth Annual {ACM-SIAM} Symposium on
  Discrete Algorithms, {SODA} 2019, San Diego, California, USA, January 6-9,
  2019}, pages 606--620. {SIAM}, 2019.

\bibitem[Gol06]{Goldreich06a}
Oded Goldreich.
\newblock On promise problems: {A} survey.
\newblock In Oded Goldreich, Arnold~L. Rosenberg, and Alan~L. Selman, editors,
  {\em Theoretical Computer Science, Essays in Memory of Shimon Even}, volume
  3895 of {\em Lecture Notes in Computer Science}, pages 254--290. Springer,
  2006.

\bibitem[Gol11]{Goldreich11}
Oded Goldreich.
\newblock In a world of {P}={BPP}.
\newblock In Oded Goldreich, editor, {\em Studies in Complexity and
  Cryptography. Miscellanea on the Interplay between Randomness and
  Computation}, volume 6650 of {\em Lecture Notes in Computer Science}, pages
  191--232. Springer, 2011.

\bibitem[Gol19]{Goldreich19}
Oded Goldreich.
\newblock Multi-pseudodeterministic algorithms.
\newblock {\em Electronic Colloquium on Computational Complexity {(ECCC)}},
  26:12, 2019.

\bibitem[Gro15]{Grossman15}
O.~Grossman.
\newblock Finding primitive roots pseudo-deterministically.
\newblock {\em Electronic Colloquium on Computational Complexity {(ECCC)}},
  22:207, 2015.

\bibitem[GZ]{GoldreichZuckerman11}
Oded Goldreich and David Zuckerman.
\newblock Another proof that {BPP} $\subseteq$ {PH} (and more).
\newblock In Oded Goldreich, editor, {\em Studies in Complexity and
  Cryptography. Miscellanea on the Interplay between Randomness and
  Computation}, volume 6650 of {\em Lecture Notes in Computer Science}, pages
  40--53. Springer.

\bibitem[Hol17]{Holden17}
Dhiraj Holden.
\newblock A note on unconditional subexponential-time pseudo-deterministic
  algorithms for {BPP} search problems.
\newblock {\em CoRR}, abs/1707.05808, 2017.

\bibitem[HS66]{HennieS66}
F.~C. Hennie and Richard~Edwin Stearns.
\newblock Two-tape simulation of multitape turing machines.
\newblock {\em J. {ACM}}, 13(4):533--546, 1966.

\bibitem[IKW02]{ImpagliazzoKabenetsWigderson02}
Russell Impagliazzo, Valentine Kabanets, and Avi Wigderson.
\newblock In search of an easy witness: exponential time vs. probabilistic
  polynomial time.
\newblock {\em J. Comput. Syst. Sci.}, 65(4):672--694, 2002.

\bibitem[Kan82]{Kan82}
R.~Kannan.
\newblock Circuit-size lower bounds and non-reducibility to sparse sets.
\newblock {\em Information and Control}, 55:40--56, 1982.

\bibitem[KW98]{KoblerWatanabe98}
J.~K{\"{o}}bler and O.~Watanabe.
\newblock New collapse consequences of {NP} having small circuits.
\newblock {\em {SIAM} J. Comput.}, 28(1):311--324, 1998.

\bibitem[LOS21]{LuOlivieraSanthanam21}
Zhenjian Lu, Igor~C. Oliveira, and Rahul Santhanam.
\newblock Pseudodeterministic algorithms and the structure of probabilistic
  time.
\newblock In {\em STOC}, 2021.
\newblock To Appear.

\bibitem[OS17]{OliveiraSanthanam17}
I.~Oliveira and R.~Santhanam.
\newblock Pseudodeterministic constructions in subexponential time.
\newblock In {\em Proceedings of the 49th Annual {ACM} {SIGACT} Symposium on
  Theory of Computing, {STOC} 2017, Montreal, QC, Canada, June 19-23, 2017},
  pages 665--677, 2017.

\bibitem[OS18]{OliveiraSanthanam18}
Igor~Carboni Oliveira and Rahul Santhanam.
\newblock Pseudo-derandomizing learning and approximation.
\newblock In {\em Approximation, Randomization, and Combinatorial Optimization.
  Algorithms and Techniques, {APPROX/RANDOM} 2018}, volume 116 of {\em LIPIcs},
  pages 55:1--55:19, 2018.

\bibitem[San09]{Santhanam09}
R.~Santhanam.
\newblock Circuit lower bounds for merlin--arthur classes.
\newblock {\em {SIAM} J. Comput.}, 39(3):1038--1061, 2009.

\bibitem[SFM78]{SFM78}
Joel~I. Seiferas, Michael~J. Fischer, and Albert~R. Meyer.
\newblock Separating nondeterministic time complexity classes.
\newblock {\em J. {ACM}}, 25(1):146--167, 1978.

\bibitem[SHI65]{SHL65}
Richard~Edwin Stearns, Juris Hartmanis, and Philip M.~Lewis II.
\newblock Hierarchies of memory limited computations.
\newblock In {\em 6th Annual Symposium on Switching Circuit Theory and Logical
  Design, Ann Arbor, Michigan, USA, October 6-8, 1965}, pages 179--190. {IEEE}
  Computer Society, 1965.

\bibitem[SV03]{SahaiVadhan03}
Amit Sahai and Salil~P. Vadhan.
\newblock A complete problem for statistical zero knowledge.
\newblock {\em J. {ACM}}, 50(2):196--249, 2003.

\bibitem[Vin05]{Vinodchandran05}
N.~V. Vinodchandran.
\newblock A note on the circuit complexity of {PP}.
\newblock {\em Theor. Comput. Sci.}, 347(1-2):415--418, 2005.

\bibitem[vMP06]{MelkebeekP06}
Dieter van Melkebeek and Konstantin Pervyshev.
\newblock A generic time hierarchy for semantic models with one bit of advice.
\newblock In {\em 21st Annual {IEEE} Conference on Computational Complexity
  {(CCC} 2006), 16-20 July 2006, Prague, Czech Republic}, pages 129--144.
  {IEEE} Computer Society, 2006.

\bibitem[Z{\'{a}}k83]{Zak83}
Stanislav Z{\'{a}}k.
\newblock A turing machine time hierarchy.
\newblock {\em Theor. Comput. Sci.}, 26:327--333, 1983.

\end{thebibliography}

\newcommand{\etalchar}[1]{$^{#1}$}

\end{document}